\documentclass[11pt]{article}
\usepackage[utf8]{inputenc}
\usepackage[left=1in, right=1in, bottom=1.15in, top=1.1in]{geometry}
\usepackage{amsmath,amssymb}

\usepackage{MG}

\title{
Autobidding Equilibria in Sponsored Shopping
}

\author{
Paul Dütting\footnote{Email: duetting@google.com.}\\Google
\and Yuhao Li\footnote{Email: yuhaoli@cs.columbia.edu. Part of the work was done when the author was a student researcher at Google.}\\Columbia University
\and Renato Paes Leme\footnote{Email: renatoppl@google.com.}\\Google
\and Kelly Spendlove\footnote{Email: spendlove@google.com.}\\Google
\and Yifeng Teng\footnote{Email: yifengt@google.com.} \\Google
}

\date{}

\begin{document}

\maketitle

\begin{abstract}
As commerce shifts to digital marketplaces, platforms increasingly monetize traffic through Sponsored Shopping auctions. Unlike classic ``Sponsored Search", where an advertiser typically bids for a single link, these settings involve advertisers with broad catalogs of distinct products. In these auctions, a single advertiser can secure multiple slots simultaneously to promote different items within the same query. This creates a fundamental complexity: the allocation is combinatorial, as advertisers simultaneously win a bundle of slots rather than a single position.

We study this setting through the lens of autobidding, where value-maximizing agents employ uniform bidding strategies to optimize total value subject to Return-on-Investment (ROI) constraints. We analyze two prevalent auction formats: Generalized Second-Price (GSP) and Vickrey-Clarke-Groves (VCG). Our first main contribution is establishing the universal existence of an Autobidding Equilibrium for both settings. Second, we prove a tight Price of Anarchy (PoA) of 2 for both mechanisms.
\end{abstract}


\section{Introduction}

The landscape of commerce is undergoing a fundamental shift as online shopping displaces traditional physical retailing. Accompanying this transition is the rapid ascent of ``Sponsored Shopping". Today, virtually all major commerce platforms such as Amazon, Google Shopping, Walmart, Alibaba, and Target monetize their traffic by displaying sponsored product listings in prominent slots alongside organic results. These allocations are determined via real-time auctions triggered by user search queries.

While Sponsored Shopping shares many similarities with the classic Sponsored Search models \cite{edelman2007internet, varian2007position}, it diverges as it is fundamentally a multi-item setting. Unlike a service advertiser who might bid on a single landing page, a retailer typically possesses a vast catalog of distinct products. A smartphone vendor, for instance, may wish to promote several different models simultaneously for the query ``Android phone". This leads to complex allocation dynamics where:

\begin{itemize} \item \textbf{Multi-Item Allocation:} A single agent can win \emph{multiple} slots in the same auction to display different products.
\item \textbf{Item-Dependent Valuations:} An agent's valuation is not merely slot-dependent; it depends on the specific product-slot pair, creating a combinatorial preference structure.
\end{itemize}

We focus on two foundational auction formats: the Generalized Second-Price (GSP) auctions, used by ad auctions in major search platforms like Google Search and Microsoft Bing; and the Vickrey–Clarke–Groves (VCG) auctions, used by social network ads in Facebook (see Eg. \cite{varian2014vcg} for a discussion of comparison). In our multi-item, multi-slot setting, both auctions use the same allocation rule: we sort all items by their bids and assign the item with the $k$-th highest bid to the $k$-th slot and break ties arbitrarily. The payment rules differ for the two mechanisms. In brief, in GSP auctions \emph{without self-pricing}, for each position the winner pays the next-highest bid that is \emph{not} among their bids.\footnote{We do not allow an ad getting priced by another ad from the same advertiser to avoid self competition. This is a natural assumption adapted by major ads platforms such as Google \cite{googleads_auction} to ensure that each ad campaign does not have internal bid competition. This assumption does not make the equilibrium existence and the PoA result in our paper easier. In fact, both theorems extend to GSP that allows self-pricing.} In VCG auctions, each bidder pays the externality they impose on others. 

In studying bidding equilibria in these auctions, we consider the model of autobidding, which aligns with the structure of modern ad marketplaces.\footnote{Autobidding is the prevalent form to manage advertising campaigns, across different digital advertising products 
\cite{BergemannBW23,DengEtAl24}. 
For example, \cite{marketingltb} estimates that in 2025 over 90\% of all digital display ads were purchased programmatically.
} In the autobidding paradigm, advertisers submit high-level goals and constraints—such as a budget or a target Return on Ad Spend (tROAS)—rather than per-query bids. An algorithmic proxy then converts these constraints into specific bids at auction time, optimizing for predicted conversion rates and value in real-time.

We model this behavior using the \emph{uniform-bidding strategy}, a canonical model for value-maximizing autobidders subject to ROI constraints. Under this strategy, an agent's bid for any given item is their true value scaled by a single, uniform multiplier across all their items. The agent's goal is to maximize the total value of allocated ads subject to a strict ROI constraint (or equivalently, a tROAS constraint), ensuring that their total payment does not exceed their realized value.

We adopt the solution concept of \emph{autobidding equilibrium} \cite{li2024vulnerabilities}, where each bidder chooses a multiplier such that they either satisfy their ROI constraint exactly or hit a maximum bid cap. 

\subsection{Our Results}

Our first main theorem establishes the existence of an autobidding equilibrium in Sponsored Shopping  under both the GSP and the VCG auction format.

While proving the existence of autobidding equilibria is non-trivial even in single-item settings, the multi-item sponsored shopping environment introduces significant combinatorial complexity that renders previous techniques insufficient. 

We provide a detailed discussion of the techniques in \Cref{sec:techniques} below.

\medskip\noindent
\textbf{Main Result 1.} (\Cref{thm:existence_main}).
For any sponsored shopping auction with GSP or VCG payment, there always exists an autobidding equilibrium.
\medskip

As our second main result, we provide 
a tight analysis of the Price of Anarchy (PoA)---the worst-case ratio between the optimal social welfare and the welfare achieved in an autobidding equilibrium.
Namely, we show that for both GSP and VCG, the \emph{tight} PoA is $2$.

\medskip\noindent
\textbf{Main Result 2.} (\Cref{theorem: poa for GSP})
For any GSP (or VCG) auction, the welfare of any autobidding equilibrium is at least half of the welfare of the optimal allocation.
\medskip

We note that these PoA bounds match the bounds shown in \cite{aggarwal2019autobidding} and \cite{deng2021towards} for GSP and VCG in multi-slot auctions. 

As we discuss in \Cref{sec:techniques}, the characteristics of the sponsored shopping environment which set it apart from the multi-slot case present significant additional challenges, and necessitate the development of several new tools for bounding the Price of Anarchy. 

\subsection{Technical Contributions}
\label{sec:techniques}

We begin by examining the existence of an autobidding equilibrium. 
Previous existence results for autobidding equilibria (also called pacing equilibria in the literature)—whether under budget constraints~\cite{conitzer2022multiplicative} or ROI constraints~\cite{li2024vulnerabilities}—have primarily focused on the simple second-price auction. To address the discontinuity inherent in the second-price payment rule, these works rely on so-called $(\eps,H)$-smoothed pacing games. In this approach, they modify the allocation and payment rules of the auction itself, using the Debreu-Glicksberg-Fan theorem to establish the existence of an equilibrium in the smoothed game, and subsequently proving that the limit of a sequence of these equilibria converges to an equilibrium of the original game.

However, the smoothed mechanisms constructed in these works are explicitly tailored to the second-price auction and do not generalize easily to the more complex combinatorial structures of GSP and VCG auctions that we are considering here. Instead of modifying the mechanism, we introduce a 
\emph{generic input-smoothing framework}. We perturb the bids directly; this allows us to treat the auction mechanism as a black box, requiring only that it yields a unique outcome under strict bid orderings. This shift not only simplifies the proof and improves intuition but also makes the framework highly adaptable to other complex settings.

Our framework is as follows: We define the smoothed game by perturbing the bid for each item with a small random variable drawn uniformly and independently from $[0, \eps]$. The crucial observation is that, with probability 1, the perturbed bids induce a strict ordering of items. Consequently, the allocation and payments are uniquely determined by the auction format without ambiguity in tie-breaking. This ensures that the expected value and payment for each bidder are well-defined. In most well-behaved auction formats (in particular GSP and VCG in our paper), the expected value and payment can be shown to be continuous with respect to the multipliers. The existence of an autobidding equilibrium in the smoothed game then follows from the Brouwer fixed point theorem.

A subtle technical challenge arises regarding the ROI constraint. In the standard game, a bidder $i$ with $\alpha_i=1$ is guaranteed not to violate their ROI constraint (as they never bid above value). In the smoothed game, however, the random perturbation may cause a bidder with $\alpha_i=1$ to ``overbid'' by up to $\eps$, potentially violating the strict ROI constraint. We address this by establishing the existence of an equilibrium under a \emph{relaxed} ROI constraint in the smoothed game. We then show that as $\eps \to 0$, the limit of these equilibria converges to a state where the strict ROI constraint is satisfied.

It is also interesting to remark that recent works~\cite{filos2023fixp,filos2024ppad} have introduced powerful optimization-based techniques to establish equilibrium existence, with the goal of simultaneously settling upper bounds on complexity, namely, $\FIXP$/$\PPAD$-membership. While these techniques effectively simplify and unify proofs for a wide range of problems (including pacing/autobidding equilibria for second-price auctions under budget or ROI constraints), it is unclear whether they can be applied to our setting. These frameworks require payments to be representable by simple arithmetic circuits, whereas in GSP and VCG, the price of an item depends on the entire suffix of the global ranking (e.g., finding the highest competing bid in GSP), introducing combinatorial dependencies that are difficult to encode as simple arithmetic circuits. Our framework, therefore, offers an independent and structurally simpler pathway to establishing existence and is highly adaptable to other complex auctions.

To establish the PoA upper bound of 2, we employ a smoothness-style argument. Specifically, we prove that for any allocation $a$, the sum of social welfare and revenue is lower-bounded by the optimal social welfare:
\begin{equation}\label{eq:smoothness_condition}
    \mathrm{Wel}(a) + \mathrm{Rev}(a) \ge \mathrm{Wel}^{\mathsf{opt}}.
\end{equation}
Combined with the ROI feasibility condition, which implies that in the aggregate the expected social welfare is lower-bounded by the expected revenue ($\mathrm{Wel}(\pi) \ge \mathrm{Rev}(\pi)$), this inequality yields the factor of 2.

The core technical challenge lies in establishing inequality (\ref{eq:smoothness_condition}) for GSP and VCG. Standard smoothness proofs typically select an \emph{arbitrary} optimal allocation $a^{\mathsf{opt}}$ as a benchmark and demonstrate that the mechanism extracts sufficient revenue from ``winners'' to compensate for the value of items in $a^{\mathsf{opt}}$ that they displace. This approach often relies on the fact that payment is invariant across optimal allocations. In our setting, however, the combinatorial nature of the payment rules crucially implies that different optimal allocations may result in different aggregate payments. 

To overcome this, we show that for any allocation $a$, there exists a \emph{specific} optimal allocation $a^{\mathsf{opt}}$ (constructed based on $a$) that satisfies the bound. We prove that even though a single bidder may win multiple slots, the GSP and VCG rules extract sufficient revenue to compensate for the items displaced in \emph{this particular} optimal allocation.

This result relies on a \emph{disjoint ownership argument}: we prove that for any prefix of slots $k$, the set of items winning in the mechanism but not in the optimum, and the set of items in the optimum but not in the mechanism, must belong to disjoint sets of bidders. This property ensures that the displaced optimal items serve as valid ``price-setters'' in the payment rules, guaranteeing high enough revenue to bound the welfare loss. We believe this argument may be of independent interest for future studies of GSP and VCG in autobidding environments.

\subsection{Related Work}

\paragraph{Autobidding and PoA} Autobidding systems have become the standard in online advertising, shifting the paradigm from manual bidding to algorithmic proxy bidding under constraints.
Advertisers typically operate under a Return-on-Investment (ROI) constraint, specifying a target cost per conversion (tCPA) or target return-on-ad-spend (tROAS).  \cite{aggarwal2019autobidding} initiated the study of autobidding systems for value-maximizers, proving that uniform bidding strategies are optimal repeated truthful auctions. They further proved the existence of Pure Nash Equilibrium under uniform bidding, with the PoA being 2 for second price auctions. \cite{deng2021towards} further generalized these results to multi-slot auctions, showing that under uniform bidding the PoA defined by Pure Nash Equilibrium for both GSP and VCG is 2. For non-truthful auctions, uniform bidding may not be truthful, and \cite{liaw2023efficiency} proves that in first-price auctions the PoA is still 2. Without uniform bidding, the PoA for GSP can be as bad as 0 \cite{deng2024efficiencyGSP}. For more results related to autobidding, we refer the readers to check a thorough survey \cite{aggarwal2024auto}.

\paragraph{Autobidding Equilibrium} The autobidding equilibrium for budget constrained advertisers has been studied in various auction settings including first-price and second-price auctions \cite{balseiro2019learning,conitzer2022pacing,chen2024complexity,chen2021throttling,conitzer2022multiplicative,chen2025constant}, while the autobidding equilibrium of ROI constrained bidders is less studied. For ROI constrained autobidders, Li and Tang \cite{li2024vulnerabilities} formalized the concept of autobidding (pacing) equilibrium for ROI-constrained bidders, and demonstrated the computational hardness of (approximately) computing the optimal equilibrium.  \cite{filos2024ppad} provided a general framework proving PPAD-membership for search problems with exact rational solutions, citing ROI constrained bidders as an important application. However, their general framework cannot easily be applied to establish equilibrium existence in the multi-item setting, as it is hard to construct a gadget for identifying the winning bidders' identity within their framework. Beyond these existence results, \cite{paes2024complex} demonstrate that the existence of an autobidding equilibrium is not sufficient to guarantee that autobidding dynamics will converge.

\paragraph{Position Auction}
The theoretical framework for position auctions has been established by early works \cite{edelman2007internet,varian2007position} after the commercial success of search ads in the early 2000s. These foundational works formally defined the GSP auction used in sponsored search, and characterized a locally envy-free (or symmetric) equilibrium with the same optimal welfare as VCG. \cite{leme2010pure} initiates the study of the PoA of GSP auctions for both full-information settings and Bayesian settings, and gives the first constant PoA for these settings. The results were subsequently improved by \cite{caragiannis2011efficiency}, where the paper shows a PoA of 1.282 with pure-strategy Nash Equilibrium in the full-information setting, and a PoA of 2.927 in the Bayesian setting.

\section{Preliminaries}
In a multi-slot auction, we have $n$ bidders and $K$ slots. Each bidder $i$ has $m$ items (or ads), with values
\[
v_{i1}\geq v_{i2}\geq \cdots\geq v_{im}.
\]

The slots for this auction have decreasing importance to the bidders, which is modeled by the click-through-rate $c_1\geq c_2\geq\cdots\geq c_K$. This means if bidder $i$'s $j$th item is placed in the $k$-th slot, then the bidder $i$'s utility for this item is $c_k v_{ij}$. Without loss of generality, we assume that $v_{ij}\geq 0$ for all $i\in[n]$ and $j\in[m]$ and $c_1\leq 1$.

In this work, we are interested in two different mechanisms: the generalized second-price and VCG mechanisms. They share a common allocation rule.

\paragraph{Allocation Rule}
The mechanisms allocate items by bids. We sort all items by their bids and allocate the item with the $k$-th highest bid to the $k$-th slot. In the event of tied bids, we apply a randomized tie-breaking rule. Formally, let $\Omega$ be the set of all permutations of the $nm$ items. Let $a\in\Omega$ be an allocation where $a(k)=(i,j)$ denotes that the item at rank $k$ is bidder $i$'s $j$-th item. An allocation $a\in \Omega$ is \emph{valid under bids $\mathbf b$} if the items are ordered non-increasingly by bid:$$b_{a(1)} \ge b_{a(2)} \ge \cdots \ge b_{a(nm)}.$$Given bids $\mathbf b$ and a valid allocation $a$, the value of bidder $i$ is defined as$$V_i(a) \coloneqq \sum_{k\in[K] : a(k)=(i,j)} c_k \cdot v_{ij}.$$
We define the \emph{social welfare} of an allocation $a$ as the total value of all bidders:
\[
\mathrm{Wel}(a) \coloneqq \sum_{i=1}^n V_i(a).
\]

Let $\pi$ be a distribution over permutations in $\Omega$. We say that $\pi$ is \emph{valid under $\mathbf b$} if every allocation in the support of $\pi$ is valid under $\mathbf b$. We interpret $\pi$ as a \emph{tie-breaking rule}. Given a tie-breaking rule $\pi$, the expected value of bidder $i$ is
\[
V_i(\pi) \coloneqq \mathbb{E}_{a\sim \pi}\left[ V_i(a)\right].
\]
Accordingly, the expected social welfare under a tie-breaking rule $\pi$ is
\[\mathrm{Wel}(\pi) \coloneqq \mathbb{E}_{a\sim \pi}[\mathrm{Wel}(a)].
\]
We let $\mathrm{Wel}^{\mathsf{opt}}$ denote the optimal social welfare with respect to the true valuations:
\[
\mathrm{Wel}^{\mathsf{opt}}\coloneqq \max_{a\in \Omega} \mathrm{Wel}(a).
\]
Note that while social welfare depends only on the items assigned to the top $K$ slots, the relative ordering of items ranked below position $K$ remains relevant, as it affects payments in the GSP and VCG mechanisms that we will introduce below. 

The payment rules differ for the two mechanisms. In brief:
\begin{itemize}
\item GSP: For each position, the winner pays the next-highest bid that is \emph{not} among their bids.
\item VCG: Each bidder pays the externality they impose on others.
\end{itemize} 
We formally define the payment rules for them next.

\subsection{Generalized Second-Price (GSP) Auctions}

At a high level, in a multi-slot position auction, the generalized second-price (GSP) payment rule charges each allocated item the next highest bid.
In our setting, since each bidder may bid for multiple items, we further impose a natural \emph{no self-pricing} requirement: an item is priced only against items submitted by other bidders. Accordingly, each item pays the highest bid among all lower-ranked items submitted by other bidders.

We now formally describe the generalized second-price (GSP) payment rule for the
multi-item, multi-slot position auction.

Fix bids $\mathbf b$ and a valid allocation $a\in\Omega$ under $\mathbf b$.
For each item $(i,j)$, let $k$ denote its position in the global order $a$, i.e.,
$a(k)=(i,j)$.
If $k$ exceeds the number of available slots $K$, the item is not allocated and
incurs zero payment.
Otherwise, the GSP \emph{per-click price} for item $(i,j)$ is defined as
\[
\tau_{ij}^{\mathrm{GSP}}(a,\mathbf b)
\coloneqq
\max\Bigl\{
b_{i'j'}
\mid
(i',j') = a(k'),\ k'>k,\ i'\neq i
\Bigr\},
\]
with the convention that the maximum of an empty set is zero. Then the GSP payment for item $(i,j)$ is defined as 
\[
p_{ij}^{\mathrm{GSP}}(a,\mathbf b)\coloneqq c_k\cdot \tau_{ij}^{\mathrm{GSP}}(a,\mathbf b).
\]
The total payment of bidder $i$ is then given by
\[
P_i^{\mathrm{GSP}}(a,\mathbf b)
\coloneqq
\sum_{j=1}^m p_{ij}^{\mathrm{GSP}}(a,\mathbf b).
\]
And the expected payment of bidder $i$ under a tie-breaking rule $\pi$ is 
\[
P_i^{\mathrm{GSP}}(\pi,\mathbf b)
\coloneqq
\mathbb{E}_{a\sim \pi}\left[ P_i^{\mathrm{GSP}}(a,\mathbf b)\right].
\]

\subsection{Vickrey–Clarke–Groves (VCG) Auctions}
At a high level, the VCG mechanism charges each bidder the externality they impose on the other bidders. In our setting, this corresponds to the reduction in the total \emph{bid-welfare} of all other bidders caused by the presence of bidder $i$. Since a single bidder may win multiple slots, the payment is calculated at the bidder level rather than the item level.

We now formally describe the VCG payment rule. First, we define the \emph{bid-welfare} of an allocation $a \in \Omega$ with respect to bids $\mathbf{b}$ as the sum of the bids of the allocated items, weighted by the click-through rates:
\[
W(a, \mathbf{b}) \coloneqq \sum_{k=1}^K c_k \cdot b_{a(k)}.
\]
Note that the allocation rule selects an allocation $a$ that maximizes this quantity.

To determine the externality of bidder $i$, we consider the counterfactual scenario where bidder $i$ does not participate. Let $\mathbf{b}_{-i}$ denote the bid profile where all bids from bidder $i$ are replaced by zeros. We define the \emph{optimal counterfactual bid-welfare} without bidder $i$ as:
\[
W^{\mathsf{opt}}_{-i}(\mathbf{b}) \coloneqq \max_{a' \in \Omega} W(a', \mathbf{b}_{-i}).
\]
This term represents the maximum possible bid-welfare achievable by the remaining bidders if bidder $i$ were absent.

Now, fix the actual bids $\mathbf{b}$ and a valid allocation $a \in \Omega$ chosen by the mechanism. The bid-welfare realized by other bidders in this specific allocation is given by:
\[
W_{-i}(a, \mathbf{b}) \coloneqq \sum_{k=1}^K \mathbf{1}[a(k) \notin \{i\}\times[m]] \cdot c_k \cdot b_{a(k)}.
\]
The VCG payment for bidder $i$ under allocation $a$ is then defined as the difference between the optimal welfare of others (without $i$) and the realized welfare of others (with $i$):
\[
P_i^{\mathrm{VCG}}(a, \mathbf{b}) \coloneqq W^{\mathsf{opt}}_{-i}(\mathbf{b}) - W_{-i}(a, \mathbf{b}).
\]
And the expected payment of bidder $i$ under a tie-breaking rule $\pi$ is
\[
P_i^{\mathrm{VCG}}(\pi,\mathbf{b}) \coloneqq \mathbb{E}_{a\sim \pi}\left[ P_i^{\mathrm{VCG}}(a,\mathbf{b})\right].
\]

\subsection{An Example of GSP and VCG Payment}

We next present a simple example that illustrates the differences between the two payment rules.

\begin{example}[GSP and VCG Payments] Consider a multi-slot auction with $3$ slots. The click-through rates are $c_1 = 1.0$, $c_2=0.7$ and $c_3 = 0.5$.
\begin{itemize}
    \item Bidder 1 bids on two ads: Ad $A_1$ with bid $b_{A_1}=10$ and ad $A_2$ with bid $b_{A_2}=8$.
    \item Bidder 2 bids on two ads: Ad $B_1$ with bid $b_{B_1}=15$, and ad $B_2$ with bid $b_{B_2}=6$.
\end{itemize}

GSP and VCG produce the same allocation: As bids $b_{B_1}>b_{A_1}>b_{A_2}>b_{B_2}$, ad $B_1$ is allocated to slot 1; ad $A_1$ is allocated to slot 2; ad $A_2$ is allocated to slot 3.

\medskip
\textbf{GSP Payment (No Self-Pricing):} 
For each slot, the price is set by the highest bid from a different bidder ranked below that slot. 
\begin{itemize}
\item The price of slot 1 is determined by $b_{A_1}$ with price $c_1\cdot b_{A_1}=1.0\cdot 10 = 10$.
\item The price of slot 2 and 3 are determined by $b_{B_2}$, with the second slot having price $c_2\cdot b_{B_2}=0.7\cdot 6=4.2$ and the third slot having price $c_3\cdot b_{B_2}=0.5\cdot 6=3$.
\end{itemize}

The total payment from Bidder 1 is 7.2 for slot 2 and 3. The payment from Bidder 2 is 10 for winning the first slot.

\medskip
\textbf{VCG Payment:}
Each bidder pays the externality they impose on the other bidder. In other words, a bidder pays the difference between the total welfare others would have obtained if the bidder were absent, and the welfare they actually obtain.\footnote{The welfare in VCG mechanisms is the ``bid welfare'' that is determined by the bids reported by the agents. It is different from the ``social welfare'' that is defined from the agents' true values.}
\begin{itemize}
\item In the current allocation, Bidder 1 has allocated welfare $c_2\cdot b_{A_1}+c_3\cdot b_{A_2}=0.7\cdot10+0.5\cdot8=11$; Bidder 2 has allocated welfare $c_1\cdot b_{B_1}=1.0\cdot 15=15$.
\item When Bidder 2 is removed, Bidder 1 is allocated the first 2 slots, with welfare $c_1\cdot b_{A_1}+c_2\cdot b_{A_2}=1.0\cdot10+0.7\cdot8=15.6$. Thus Bidder 2's payment is $15.6-11=4.6$.
\item When Bidder 1 is removed, Bidder 2 is allocated the first 2 slots, with welfare $c_1\cdot b_{B_1}+c_2\cdot b_{B_2}=1.0\cdot15+0.7\cdot6=19.2$. Thus Bidder 1's payment is $19.2-15=4.2$.
\end{itemize}
\end{example}

\subsection{Autobidding Equilibria}
As discussed in the introduction, we focus on the setting with \emph{autobidders}, where every bidder participates using a uniform-bidding strategy. Specifically, each bidder $i$ chooses a multiplier $\alpha_i\in[1,A]$, and submits a bid $b_{ij}=\alpha_i \cdot v_{ij}$ for their $j$th item. 
In the existing literature, an autobidding equilibrium is typically parameterized by a pair $(\alpha, x)$, where $\alpha$ represents the bidders' multipliers and $x(\alpha)$ denotes an explicit tie-breaking rule—determined by the mechanism—that depends solely on $\alpha$. Explicitly encoding these tie-breaking rules is essential for establishing existence in general settings~\cite{conitzer2022pacing,chen2024complexity,chen2021throttling,conitzer2022multiplicative,chen2025constant,li2024vulnerabilities}, as the solution often hinges on specific boundary behaviors. This requirement mirrors a broader tradition in equilibrium theory, notably in the foundational works of~\cite{Arrow-Debreu,hylland1979efficient,bogomolnaia2001new} and many others. In this paper, we follow this established methodology and also define autobidding equilibrium as a pair $(\alpha,\pi)$, where $\pi$ is the tie-breaking rule defined above. Note that as demonstrated in \cite{li2024vulnerabilities}, the parameter $A$ below is a sufficiently large number relative to the instance parameters.

\begin{definition}[Autobidding Equilibria for $\calM$]\label{definition: pacing equilibrium GSP}
    Given a position auction $G = (n, m, K, (v_{ij}), (c_k),\calM)$, where $\calM\in\set{\mathrm{GSP},\mathrm{VCG}}$ is the payment rule, a pair $(\alpha, \pi)$, where $\alpha=(\alpha_i)\in [1,A]^n$ and $\pi$ is a distribution over all allocations $a\in\Omega$, 
  is an \emph{autobidding equilibrium} of $G$ if the following conditions hold:
\begin{itemize}
    \item[(a)] \emph{Bid-consistent allocation.}
    Every allocation in the support of $\pi$ is valid under the induced bids
    $\mathbf b$, i.e., items are ordered in non-increasing order of bids.

    \item[(b)] \emph{ROI feasibility.}
    For every bidder $i\in[n]$, the expected value obtained under $\pi$
    is at least the expected payment under $\calM$: $
    V_i(\pi) \ge P_i^{\calM}(\pi,\mathbf b).$
    That is, each bidder satisfies their return-on-investment (ROI) constraint.

    \item[(c)] \emph{Maximal autobidding.}
    For every bidder $i\in[n]$, if the ROI constraint is slack $V_i(\pi) \;>\; P_i^{\calM}(\pi,\mathbf b)$,
    then bidder $i$ is maximally paced, i.e., $\alpha_i = A$.
\end{itemize}
\end{definition}

\section{Equilibrium Existence}

In this section, we establish the universal existence of autobidding equilibria.

\begin{theorem}\label{thm:existence_main}
    For any position auction $G$ with payment rule $\mathcal{M}\in\{\mathrm{GSP},\mathrm{VCG}\}$, there exists an autobidding equilibrium $(\alpha, \pi)$ satisfying \Cref{definition: pacing equilibrium GSP}.
\end{theorem}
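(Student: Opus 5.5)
We follow the input-smoothing framework sketched in \Cref{sec:techniques}. Fix $\eps>0$. For multipliers $\alpha\in[1,A]^n$, we form perturbed bids $\tilde b_{ij}=\alpha_i v_{ij}+\eps U_{ij}$, where the $U_{ij}$ are i.i.d.\ uniform on $[0,1]$. With probability one all $nm$ perturbed bids are distinct, so the ranking $a$ and the GSP/VCG payments computed from $\tilde{\mathbf b}$ are uniquely determined. We then define $V_i^\eps(\alpha)=\mathbb E_U[V_i(a)]$ and $P_i^\eps(\alpha)=\mathbb E_U[P_i^{\calM}(a,\tilde{\mathbf b})]$.

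The first step is to show that both maps are continuous in $\alpha$. The ranking is a locally constant function of $\tilde{\mathbf b}$ off the measure-zero tie set. Given the ranking, each GSP price is a maximum over bids of other bidders ranked below, and each VCG payment is a difference of sorted weighted sums. Hence payments are bounded (by $c_1(A\max v+\eps)K$) and continuous off the tie set. Since $\alpha\mapsto\tilde{\mathbf b}$ is a translation of a product of uniform densities, dominated convergence gives continuity, because the translate of a bounded density converges in $L^1$.

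Next we build a Brouwer map on $[1,A]^n$, for instance
\[
\alpha_i\mapsto \mathrm{proj}_{[1,A]}\bigl(\alpha_i+V_i^\eps(\alpha)+\eps c_1 m - P_i^\eps(\alpha)\bigr).
\]
The slack $\eps c_1 m$ is the relaxation of the ROI constraint: a bidder with $\alpha_i=1$ can overbid by at most $\eps$ per item. We must show that at a fixed point $\alpha^\eps$ each bidder satisfies the relaxed ROI constraint $P_i^\eps\le V_i^\eps+\eps c_1m$, with equality unless $\alpha_i^\eps=A$. The only bad case is a fixed point at $\alpha_i=1$ with the relaxed constraint violated. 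This is ruled out as follows. In GSP each winning item pays at most its own perturbed bid, so it pays at most $c_k(v_{ij}+\eps)$. In VCG the payment is at most the bidder's reported bid-welfare, which is at most $V_i+\eps c_1 m$. Hence the relaxed constraint always holds at $\alpha_i=1$.

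Finally we pass to the limit $\eps\to0$. By compactness we take a subsequence with $\alpha^\eps\to\alpha^*$, and the induced distributions $\pi^\eps$ over rankings in the finite set $\Omega$ converge to some $\pi^*$. Each $\pi^\eps$ is a random ranking induced by the perturbed bids.

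\begin{itemize}
\item[(a)] \emph{Validity.} If $\alpha_i^*v_{ij}>\alpha_{i'}^*v_{i'j'}$, then for small $\eps$ item $(i,j)$ is always ranked above $(i',j')$. Hence every allocation in the support of $\pi^*$ is valid under $\mathbf b^*$.
\item[(b)] \emph{ROI feasibility and (c) maximal autobidding.} For a fixed ranking $a$, the payment $P_i^{\calM}(a,\cdot)$ is continuous in the bids. Therefore $P_i^\eps\to P_i^{\calM}(\pi^*,\mathbf b^*)$ and $V_i^\eps\to V_i(\pi^*)$. The relaxed constraint with its vanishing slack then becomes exact ROI feasibility. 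If $\alpha_i^*<A$, then $\alpha_i^\eps<A$ for small $\eps$, so the constraint held with equality along the subsequence and therefore holds with equality in the limit.
\end{itemize}

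The main obstacle is the continuity step together with the careful handling of the boundary $\alpha_i=1$. Continuity is where the black-box nature of the mechanism is used: we only need that the outcome is unique under strict orderings and that payments are continuous for a fixed ranking. We would verify this explicitly for the GSP no-self-pricing rule and for the VCG counterfactual optimum, since the latter is itself a sorted sum and hence continuous.
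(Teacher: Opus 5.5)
Your proof is correct and follows essentially the same route as the paper: uniform input perturbation, a projected fixed-point map on $[1,A]^n$ via Brouwer, the overbid-by-at-most-$\eps$ bound at $\alpha_i=1$, and a compactness limit over both $\alpha$ and the induced ranking distributions. The differences are minor: you build the $\eps c_1 m$ slack into the map, so the relaxed constraint is tight whenever $\alpha_i<A$, and you prove continuity by a translation/dominated-convergence argument instead of the paper's moving-polytope volume argument (your VCG bound via the bidder's own reported bid-welfare is also slightly more careful than the paper's per-item phrasing).
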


We prove this theorem using the smoothed framework. As discussed in the introduction, it remains an interesting open question whether the techniques developed in \cite{filos2023fixp,filos2024ppad} can be applied to establish equilibrium existence in our setting. 

In the smoothed framework, we introduce a smoothed version of the auction where bids are perturbed by small, continuous random noise. We first show that an autobidding equilibrium always exists in the smoothed game by the Brouwer fixed point theorem. Then, we recover an equilibrium for the original game by taking the limit as the noise magnitude approaches zero.

\subsection{The Smoothed Framework}
We describe the smoothed auction parameterized by $\eps>0$ below: For every bidder $i$ and item $j$, the bid is perturbed by a random noise $\xi_{ij}$ drawn independently from a uniform distribution over $[0, \eps]$. Formally, given multipliers $\boldsymbol{\alpha} \in [1, A]^n$, the perturbed bid for item $(i,j)$ is defined as
\[
\tilde{b}_{ij}(\alpha_i) \coloneqq \alpha_i v_{ij} + \xi_{ij}, \quad \text{where } \xi_{ij} \sim U[0, \eps].
\]
Let $\boldsymbol{\xi} \in [0, \eps]^{nm}$ denote the vector of perturbations. The allocation and payments are determined by the realized perturbed bids $\tilde{\mathbf{b}}$.

Let $\mathcal{M} \in \set{\mathrm{GSP}, \mathrm{VCG}}$ denote the mechanism. We define the smoothed expected value and expected payment for bidder $i$ under mechanism $\mathcal{M}$ as:
\[
\mathcal{V}^\eps_i(\boldsymbol{\alpha}) \coloneqq \mathbb{E}_{\boldsymbol{\xi}}\left[ V_i(\tilde{\mathbf{b}}(\boldsymbol{\alpha})) \right] 
\quad \text{and} \quad 
\mathcal{P}^{\mathcal{M}, \eps}_i(\boldsymbol{\alpha}) \coloneqq \mathbb{E}_{\boldsymbol{\xi}}\left[ P_i^{\mathcal{M}}(\tilde{\mathbf{b}}(\boldsymbol{\alpha})) \right].
\]
For the notations $V_i(\tilde{\mathbf{b}}(\boldsymbol{\alpha})) $ and $P_i^{\mathcal{M}}(\tilde{\mathbf{b}}(\boldsymbol{\alpha})) $ above, we omit the allocation parameter since, with probability 1, there are no ties and there is a unique valid allocation under the bids. Thus, these expected quantities are well-defined without specifying a tie-breaking rule. Indeed, since the noise terms $\xi_{ij}$ are drawn from a continuous distribution, the probability of any two distinct items having identical perturbed bids is zero. Consequently, for any fixed multipliers $\boldsymbol{\alpha}$, the perturbed bids induce a strict ordering of items with probability 1. Thus, the allocation, values, and payments are uniquely determined almost everywhere with respect to the measure of $\boldsymbol{\xi}$, making the specific choice of tie-breaking rule irrelevant to the expectations.

The equilibrium conditions (ROI feasibility and maximal autobidding) are naturally adapted to these smoothed expectations. However, in the smoothed auction, we need to relax the ROI constraint. The relaxed ROI constraint says that for every bidder $i$, the expected value obtained under $\pi$ is not too much below the expected GSP payment: 
\[
\mathcal{V}^\eps_i(\boldsymbol{\alpha}) \ge \mathcal{P}^{\calM,\eps}_i(\boldsymbol{\alpha})-\eps m.
\]

We now state our intermediate existence result.

\begin{theorem}\label{thm:smoothed_existence}
For any $\eps > 0$, there exists a vector $\boldsymbol{\alpha}^{(\eps)} \in [1, A]^n$ that constitutes a smoothed autobidding equilibrium. That is, for every bidder $i$:
\begin{itemize}
    \item $\mathcal{V}^\eps_i(\boldsymbol{\alpha}^{(\eps)}) \ge \mathcal{P}^{\calM,\eps}_i(\boldsymbol{\alpha}^{(\eps)})-\eps m$.
    \item If $\mathcal{V}^\eps_i(\boldsymbol{\alpha}^{(\eps)}) > \mathcal{P}^{\calM,\eps}_i(\boldsymbol{\alpha}^{(\eps)})$, then $\alpha_i^{(\eps)} = A$.
\end{itemize}
\end{theorem}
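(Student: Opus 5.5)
We prove \Cref{thm:smoothed_existence} in two steps: first show that the smoothed expectations are continuous in $\boldsymbol{\alpha}$, then apply Brouwer to a best-response map on $[1,A]^n$ and check that its fixed points are smoothed equilibria.

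\textbf{Step 1 (continuity).} For fixed $\boldsymbol{\xi}$, both the realized value $V_i(\tilde{\mathbf b}(\boldsymbol\alpha))$ and the payment $P_i^{\calM}(\tilde{\mathbf b}(\boldsymbol\alpha))$ are bounded by a constant depending only on $A,\eps$ and the instance. For GSP the price is at most $c_k(Av_{\max}+\eps)$ per slot; for VCG the payment is at most $W^{\mathsf{opt}}_{-i}$, which is bounded the same way. Moreover, these quantities are continuous in $\boldsymbol\alpha$ at every point where the perturbed bids are strictly ordered. On that event the ordering, and hence the allocation, is locally constant. GSP prices are then fixed coordinates of $\tilde{\mathbf b}$, and VCG payments are differences of fixed linear forms in $\tilde{\mathbf b}$, since $W^{\mathsf{opt}}_{-i}$ is a maximum of finitely many linear forms.

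For each fixed $\boldsymbol\alpha$, the set of $\boldsymbol\xi$ with a tie among the $\tilde b_{ij}$ has measure zero. So if $\boldsymbol\alpha^t\to\boldsymbol\alpha$, the integrands converge for a.e.\ $\boldsymbol\xi$, and dominated convergence gives that $\mathcal V_i^\eps$ and $\mathcal P_i^{\calM,\eps}$ are continuous on $[1,A]^n$. This is the step where smoothing matters: it replaces the discontinuous tie-breaking by a measure-zero event.

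\textbf{Step 2 (fixed-point map).} Let $g_i(\boldsymbol\alpha)=\mathcal V^\eps_i(\boldsymbol\alpha)-\mathcal P^{\calM,\eps}_i(\boldsymbol\alpha)$, which is continuous by Step 1. Define
\[
f_i(\boldsymbol\alpha)\coloneqq \min\bigl\{A,\ \max\{1,\ \alpha_i + g_i(\boldsymbol\alpha)\}\bigr\}.
\]
This map $f$ sends $[1,A]^n$ continuously into itself, so Brouwer yields a fixed point $\boldsymbol\alpha^{(\eps)}$. We check the two conditions at this fixed point.

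\emph{Second condition.} If $g_i>0$, then $\alpha_i+g_i>\alpha_i$. Hence $f_i=\alpha_i$ forces the cap to bind, i.e.\ $\alpha_i=A$.

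\emph{First condition.} If $g_i<0$, then $f_i=\alpha_i$ forces $\alpha_i=1$, since otherwise $\max\{1,\alpha_i+g_i\}<\alpha_i$. So it suffices to show $g_i\ge -\eps m$ whenever $\alpha_i=1$.

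\textbf{Step 3 (ROI slack at $\alpha_i=1$).} This is the main obstacle. We show, pointwise in $\boldsymbol\xi$, that a bidder bidding $\tilde b_{ij}=v_{ij}+\xi_{ij}\le v_{ij}+\eps$ pays at most its realized value plus $\eps m$.

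\emph{GSP.} Consider an item $(i,j)$ in slot $k$. Its per-click price is some other bidder's bid ranked below it, so the price is at most $\tilde b_{ij}\le v_{ij}+\eps$. Hence $p_{ij}\le c_kv_{ij}+c_k\eps\le c_kv_{ij}+\eps$, using $c_k\le c_1\le1$. Summing over bidder $i$'s at most $m$ allocated items gives $P_i\le V_i+\eps m$.

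\emph{VCG.} We use the standard individual-rationality argument with respect to bids. The realized allocation $a$ maximizes $W(\cdot,\tilde{\mathbf b})$. Take the counterfactual optimum for $\tilde{\mathbf b}_{-i}$, which places bidder $i$'s items at value zero. Moving bidder $i$'s items to the bottom turns it into an ordering that is feasible under $\tilde{\mathbf b}$, and its value is $W^{\mathsf{opt}}_{-i}$. Optimality of $a$ then gives
\[
W^{\mathsf{opt}}_{-i}\le W(a,\tilde{\mathbf b})=W_{-i}(a,\tilde{\mathbf b})+\sum_{k:\,a(k)\in\{i\}\times[m]}c_k\tilde b_{a(k)}.
\]
Therefore
\[
P_i^{\mathrm{VCG}}\le \sum_{k}c_k\tilde b_{a(k)}\le V_i+\eps m.
\]

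Taking expectations over $\boldsymbol\xi$ in either case gives $g_i(\boldsymbol\alpha)\ge-\eps m$ whenever $\alpha_i=1$. This completes the verification of both conditions at the fixed point $\boldsymbol\alpha^{(\eps)}$.
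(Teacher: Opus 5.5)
Your proof is correct and follows the paper's route: the same truncated update map $f_i(\boldsymbol\alpha)=\Pi_{[1,A]}\bigl(\alpha_i+\mathcal V_i-\mathcal P_i\bigr)$ and Brouwer's theorem, plus the same case analysis, where only the case $\alpha_i=1$ needs the relaxed bound (each item overbids by at most $\epsilon$). Your dominated-convergence continuity argument and your explicit bid-level individual-rationality bound for VCG are cleaner justifications of steps the paper covers with its polytope-volume argument and its remark that a bidder never pays more than its bid.
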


\begin{proof}
We define a continuous map $f: [1, A]^n \to [1, A]^n$ and apply Brouwer's fixed point theorem to find a fixed point $\boldsymbol{\alpha}^{(\eps)}$ of $f$, and show that $\boldsymbol{\alpha}^{(\eps)}$ satisfies the properties above.

At a high level, $f$ is designed to adjust the multipliers based on the bidders' ROI constraints. For each bidder $i$, let $f_i(\boldsymbol{\alpha})$ be defined as:
\[
f_i(\boldsymbol{\alpha}) \coloneqq \Pi_{[1, A]} \Big( \alpha_i + \mathcal{V}^\eps_i(\boldsymbol{\alpha}) - \mathcal{P}^{\calM,\eps}_i(\boldsymbol{\alpha}) \Big),
\]
where $\Pi_{[1, A]}(x) = \min(A, \max(1, x))$ is the truncation onto the interval $[1, A]$.

\begin{claim}\label{claim: continuous V and P}
    $\mathcal{V}^\eps_i(\boldsymbol{\alpha})$ and $\mathcal{P}^{\mathcal{M},\eps}_i(\boldsymbol{\alpha})$ are continuous functions for $\boldsymbol{\alpha}\in[1,A]^n$.
\end{claim}
\begin{proof}
    We analyze the structure of the outcome space induced by the perturbation $\boldsymbol{\xi} \sim U[0, \eps]^{nm}$.
    
    First, we show that the probability of a tie is zero for any given multipliers $\boldsymbol{\alpha}$. A tie occurs between two items $(i,j)$ and $(i',j')$ if their perturbed bids are equal:
    \[
    \alpha_i v_{ij} + \xi_{ij} = \alpha_{i'} v_{i'j'} + \xi_{i'j'}.
    \]
    This equation defines a hyperplane $H_{ij,i'j'}(\boldsymbol{\alpha})$ within the perturbation space $[0, \eps]^{nm}$.
    The set of all perturbations causing a tie, $\mathcal{T}(\boldsymbol{\alpha})$, is the finite union of such hyperplanes over all pairs of items. Thus, we have $\mathrm{Vol}(\mathcal{T}(\boldsymbol{\alpha})) = 0$.
    
    The perturbation space $[0, \eps]^{nm}$ is thus partitioned by these hyperplanes into a finite set of open polytopes. Within each polytope (region), the bids have a strict ranking. We can write the expected value as a sum over these allocations (recall that $\Omega$ is the set of all permutations of items):
    \[
    \mathcal{V}^\eps_i(\boldsymbol{\alpha}) = \sum_{a \in \Omega} V_i(a) \cdot \frac{\mathrm{Vol}(\mathcal{R}_a(\boldsymbol{\alpha}))}{\eps^{nm}},
    \]
    where $\mathcal{R}_a(\boldsymbol{\alpha})$ is the region of perturbations $\boldsymbol{\xi}$ resulting in the (unique) valid allocation $a$.
    The volume $\mathrm{Vol}(\mathcal{R}_a(\boldsymbol{\alpha}))$ is determined by the boundaries defined by the hyperplanes $H_{ij,i'j'}(\boldsymbol{\alpha})$. Since the equations defining these hyperplanes are linear in $\boldsymbol{\alpha}$, the boundaries shift continuously with $\boldsymbol{\alpha}$. Consequently, the volume of each region $\mathcal{R}_a(\boldsymbol{\alpha})$ varies continuously with $\boldsymbol{\alpha}$. As $\mathcal{V}^\eps_i$ is a linear combination of these continuous volumes, it is continuous.
    
    Similarly, the expected payment is given by:
    \[
    \mathcal{P}^{\mathcal{M},\eps}_i(\boldsymbol{\alpha}) = \sum_{a \in \Omega} \int_{\mathcal{R}_a(\boldsymbol{\alpha})} P_i^{\mathcal{M}}(a, \tilde{\mathbf{b}})\frac{1}{\eps^{nm}}\, d\boldsymbol{\xi}.
    \]
    Within each fixed ranking region $\mathcal{R}_a(\boldsymbol{\alpha})$, the allocation is fixed to $a$, and the payment function $P_i^{\mathcal{M}}$ (for both GSP and VCG) becomes a linear function of the bids (and thus of $\boldsymbol{\alpha}$). The integral is therefore continuous because both the integrand and the domain of integration (the polytope $\mathcal{R}_a(\boldsymbol{\alpha})$) deform continuously with $\boldsymbol{\alpha}$.
\end{proof}
Given \Cref{claim: continuous V and P}, since $\mathcal{V}^\eps_i$ and $\mathcal{P}^{\calM,\eps}_i$ are continuous, the map $f$ is continuous. Clearly $f$ maps $[1,A]^n$ to itself.
Thus, we can apply Brouwer's fixed point theorem to conclude that there exists a fixed point $\boldsymbol{\alpha}^{(\eps)}$ such that $f(\boldsymbol{\alpha}^{(\eps)}) = \boldsymbol{\alpha}^{(\eps)}$.

Let $\boldsymbol{\alpha}^{(\eps)}$ be a fixed point. For each bidder $i$, the update rule implies:
\[
\alpha_i^{(\eps)} = \Pi_{[1, A]} \Big( \alpha_i^{(\eps)} + \mathcal{V}^\eps_i(\boldsymbol{\alpha}^{(\eps)}) - \mathcal{P}^{\calM,\eps}_i(\boldsymbol{\alpha}^{(\eps)}) \Big).
\]
We verify the equilibrium conditions by considering the cases for the projection:
\begin{itemize}
    \item If $1 < \alpha_i^{(\eps)} < A$, the projection is inactive, implying $\alpha_i^{(\eps)} = \alpha_i^{(\eps)} + \mathcal{V}^\eps_i - \mathcal{P}^{\calM,\eps}_i$. Thus $\mathcal{V}^\eps_i(\boldsymbol{\alpha}^{(\eps)}) = \mathcal{P}^{\calM,\eps}_i(\boldsymbol{\alpha}^{(\eps)})$. In this case, the strict ROI constraint is satisfied.
    
    \item If $\alpha_i^{(\eps)} = A$, the projection implies $\mathcal{V}^\eps_i(\boldsymbol{\alpha}^{(\eps)}) - \mathcal{P}^{\calM,\eps}_i(\boldsymbol{\alpha}^{(\eps)}) \ge 0$ and thus the strict ROI constraint is satisfied.
    
    \item The case where $\alpha_i^{(\eps)} = 1$ requires a different analysis. Recall that under mechanism $\mathcal{M} \in \{\mathrm{GSP}, \mathrm{VCG}\}$, a bidder never pays more than their submitted bid for any item. In the standard auction setting, if a bidder chooses $\alpha_i=1$, their bid on each item is at most its value, ensuring that their total payment never exceeds their total value.
    
In the smoothed setting, however, even if $\alpha_i^{(\eps)} = 1$, the random perturbation may cause the bidder to bid slightly above their true value, potentially resulting in payments exceeding the realized value. This is why we introduced the relaxed ROI constraint. 
Since the perturbation is drawn from $[0, \eps]$, bidder $i$ overbids by at most $\eps$; that is, the submitted bid for each item $j$ satisfies $\tilde{b}_{ij} = v_{ij} + \xi_{ij} \leq v_{ij} + \eps$.
    
Consequently, for any valid allocation $a$ in the support of bids realized by any $\boldsymbol{\xi}$:
\[
    P_i^{\mathcal{M}}(a) \leq V_i(a) + \eps m .
\]
Taking the expectation over $\boldsymbol{\xi}$, we obtain $\mathcal{P}^{\mathcal{M},\eps}_i(\boldsymbol{\alpha}^{(\eps)}) \le \mathcal{V}^\eps_i(\boldsymbol{\alpha}^{(\eps)}) + \eps m $, satisfying the relaxed condition.
\end{itemize}
In all cases, $\boldsymbol{\alpha}^{(\eps)}$ satisfies the relaxed ROI feasibility and maximal autobidding conditions.
\end{proof}
\subsection{Proof of \Cref{thm:existence_main}}

We are now ready to prove the main theorem by taking the limit of the smoothed equilibria as the noise vanishes.

\begin{proof}[Proof of \Cref{thm:existence_main}]
    Let $\{\eps_t\}_{t=1}^\infty$ be a sequence of noise parameters such that $\lim_{t\to\infty} \eps_t = 0$. By \Cref{thm:smoothed_existence}, for each $t$, there exists a smoothed autobidding equilibrium $\boldsymbol{\alpha}^{(\eps_t)} \in [1, A]^n$. 
    The sequence $\{\boldsymbol{\alpha}^{(\eps_t)}\}$ lies in the compact metric space $[1, A]^n$. So there exists a subsequence (which we abuse the notation and also denote by $t$) that converges to a limit vector:
    \[
    \boldsymbol{\alpha}^* \coloneqq \lim_{t \to \infty} \boldsymbol{\alpha}^{(\eps_t)}.
    \]
    For each $t$, the smoothed auction induces a probability distribution over the set of allocations $\Omega$. Let $\pi^{(\eps_t)} \in \Delta(\Omega)$ be the distribution where $\pi^{(\eps_t)}(a)$ is the probability that allocation $a$ is chosen under bids $\tilde{\mathbf{b}}(\boldsymbol{\alpha}^{(\eps_t)})$ with noise $\eps_t$.
    Since the simplex $\Delta(\Omega)$ is compact, we can refine the subsequence further such that the distributions converge to a limit distribution $\pi^*$:
    \[
    \pi^*(a) \coloneqq \lim_{t \to \infty} \pi^{(\eps_t)}(a) \quad \forall a \in \Omega.
    \]
    We next show that $(\boldsymbol{\alpha}^*, \pi^*)$ is an equilibrium for the original game.
    \begin{lemma}
        $(\boldsymbol{\alpha}^*, \pi^*)$ defined above is an autobidding equilibrium.
    \end{lemma}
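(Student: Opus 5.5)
We verify the three conditions of \Cref{definition: pacing equilibrium GSP} for $(\boldsymbol{\alpha}^*,\pi^*)$ by passing each one to the limit. The engine is a convergence statement: for each $a\in\Omega$ in the support of $\pi^*$, and each fixed $i$,
\[
\mathbb{E}_{\boldsymbol{\xi}}\bigl[V_i(\tilde{\mathbf b})\bigr]\to V_i(\pi^*), \qquad \mathbb{E}_{\boldsymbol{\xi}}\bigl[P_i^{\calM}(\tilde{\mathbf b})\bigr]\to P_i^{\calM}(\pi^*,\mathbf b^*),
\]
where $\mathbf b^*$ denotes the bids induced by $\boldsymbol{\alpha}^*$. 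Here $\tilde{\mathbf b}$ is taken under $\boldsymbol{\alpha}^{(\eps_t)}$ with noise level $\eps_t$.

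\textbf{Bid-consistency (a).} Suppose $\pi^*(a)>0$. Then $\pi^{(\eps_t)}(a)>0$ for all large $t$, so some realization $\boldsymbol{\xi}$ yields perturbed bids ordered by $a$. That is, $\alpha^{(\eps_t)}_i v_{ij}+\xi_{ij}\ge \alpha^{(\eps_t)}_{i'}v_{i'j'}+\xi_{i'j'}$ whenever $(i,j)$ precedes $(i',j')$ in $a$. Since $|\xi|\le\eps_t\to0$, letting $t\to\infty$ gives $\alpha^*_iv_{ij}\ge\alpha^*_{i'}v_{i'j'}$. Hence $a$ is valid under $\mathbf b^*$.

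\textbf{Convergence of value and payment.} The value is immediate, since $V_i(a)$ does not depend on the bids. This gives $\mathcal V^{\eps_t}_i(\boldsymbol{\alpha}^{(\eps_t)})=\sum_a\pi^{(\eps_t)}(a)V_i(a)\to V_i(\pi^*)$.

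For payments, the key observation is that on the region $\mathcal R_a$ both $P^{\mathrm{GSP}}_i(a,\cdot)$ and $P^{\mathrm{VCG}}_i(a,\cdot)$ are Lipschitz functions of the bid vector with a constant depending only on $n,m,K$ (recall $c_k\le1$). For GSP this holds because the per-click price is a max of bids at fixed positions. For VCG it holds because $W^{\mathsf{opt}}_{-i}$ is a max of linear functions and $W_{-i}(a,\cdot)$ is linear. Moreover $\|\tilde{\mathbf b}-\mathbf b(\boldsymbol{\alpha}^{(\eps_t)})\|_\infty\le\eps_t$ and $\mathbf b(\boldsymbol{\alpha}^{(\eps_t)})\to\mathbf b^*$. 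Therefore
\[
\mathcal P^{\calM,\eps_t}_i=\sum_a\pi^{(\eps_t)}(a)\bigl(P_i^{\calM}(a,\mathbf b^*)+o(1)\bigr)\to P_i^{\calM}(\pi^*,\mathbf b^*).
\]
One subtlety is that the GSP payment formula depends on $a$ only through positions, so it is well defined for every $a$ even off the validity region. This is needed because $\pi^{(\eps_t)}$ may put mass on allocations that are invalid at $\mathbf b^*$. Those allocations carry vanishing mass in the limit, and the payments are bounded, so they do not affect the limit.

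\textbf{ROI feasibility (b).} Take limits in the relaxed constraint $\mathcal V^{\eps_t}_i\ge\mathcal P^{\calM,\eps_t}_i-\eps_t m$. This yields $V_i(\pi^*)\ge P_i^{\calM}(\pi^*,\mathbf b^*)$, the exact constraint.

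\textbf{Maximal autobidding (c).} Suppose $V_i(\pi^*)>P_i^{\calM}(\pi^*,\mathbf b^*)$. By the convergence above, strict slack $\mathcal V^{\eps_t}_i>\mathcal P^{\calM,\eps_t}_i$ holds for all large $t$. \Cref{thm:smoothed_existence} then forces $\alpha^{(\eps_t)}_i=A$ for those $t$, and hence $\alpha^*_i=A$.

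\textbf{Main obstacle.} The hardest step is the payment-convergence claim. It requires checking carefully that the expected payment in the smoothed game is $\sum_a\pi^{(\eps_t)}(a)$ times something uniformly close to $P_i^{\calM}(a,\mathbf b^*)$. For VCG in particular, the counterfactual optimum must be shown continuous in the bids. I would handle this with the explicit Lipschitz bounds above.
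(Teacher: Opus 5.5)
Your proposal is correct and follows essentially the same route as the paper: pass to the convergent subsequence, show $\mathcal{V}_i^{\varepsilon_t}(\boldsymbol{\alpha}^{(\varepsilon_t)})\to V_i(\pi^*)$ and $\mathcal{P}_i^{\mathcal{M},\varepsilon_t}(\boldsymbol{\alpha}^{(\varepsilon_t)})\to P_i^{\mathcal{M}}(\pi^*,\mathbf{b}^*)$, and then obtain (a)--(c) by taking limits; your direct argument for (a) is the contrapositive of the paper's contradiction argument. The only real difference is that you make the payment convergence quantitative with explicit Lipschitz bounds on $P_i^{\mathrm{GSP}}(a,\cdot)$ and $P_i^{\mathrm{VCG}}(a,\cdot)$, and you correctly note that these formulas make sense for every $a$, including ones invalid at $\mathbf{b}^*$, whereas the paper simply invokes continuity of the payment in the bids.
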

    \begin{proof}
        We first note that 
        \[
    \lim_{t \to \infty} \mathcal{V}_i^{\eps_t}(\boldsymbol{\alpha}^{(\eps_t)}) 
    =  V_i(\pi^*)\quad\text{and}\quad \lim_{t \to \infty} \mathcal{P}_i^{\calM,\eps_t}(\boldsymbol{\alpha}^{(\eps_t)}) 
    = P_i^{\calM}(\pi^*, \mathbf{b}^*).
    \]
    This is because for any fixed allocation $a$, the value $V_i(a)$ is constant, so we have 
    \[
    \lim_{t \to \infty} \mathcal{V}_i^{\eps_t}(\boldsymbol{\alpha}^{(\eps_t)}) 
    = \lim_{t \to \infty} \sum_{a \in \Omega} \pi^{(\eps_t)}(a) V_i(a) 
    = \sum_{a \in \Omega} \pi^*(a) V_i(a) 
    = V_i(\pi^*).
    \]

    Similarly for payments, since $\boldsymbol{\alpha}^{(\eps_t)} \to \boldsymbol{\alpha}^*$ and the noise vanishes, the realized bids in the smoothed game converge to $\mathbf{b}^* = (\alpha^*_iv_{ij})_{i\in[n],j\in[m]}$. Note that the payment $P_i^{\calM}(a, \mathbf{b})$ is a continuous function of the bids $\mathbf{b}$.
    Using the linearity of expectation and the continuity of payments:
    \[
    \lim_{t \to \infty} \mathcal{P}_i^{\calM,\eps_t}(\boldsymbol{\alpha}^{(\eps_t)}) 
    = \lim_{t \to \infty} \sum_{a \in \Omega} \int_{\mathcal{R}_a(\boldsymbol{\alpha}^{(\eps_t)})} P_i^{\mathcal{M}}(a, \tilde{\mathbf{b}})\frac{1}{\eps_t^{nm}}\, d\boldsymbol{\xi}. = \sum_{a \in \Omega} \pi^*(a) P_i^{\calM}(a, \mathbf{b}^*) 
    = P_i^{\calM}(\pi^*, \mathbf{b}^*).
    \]
    \end{proof}
Now, we verify the equilibrium conditions for $(\boldsymbol{\alpha}^*, \pi^*)$:
        \begin{enumerate}
            \item[(a)] \textbf{Bid-consistent allocation:} If $\pi^*(a) > 0$, then $a$ must be valid under $\mathbf{b}^*$. If $a$ were invalid (i.e., violated the rank order of $\mathbf{b}^*$), then there exists a sufficiently large $t$ such that for all $t'\geq t$, we have that $\boldsymbol{\alpha}^{(\eps_{t'})}$ is close to $\boldsymbol{\alpha}^*$ and $a$ would also be invalid under the perturbed bids $\tilde{\mathbf{b}}$, implying $\pi^{(\eps_t)}(a) = 0$. This contradicts $\pi^*(a) > 0$.
            \item[(b)] \textbf{ROI feasibility:} For each $t$, we have $\mathcal{V}_i^{\eps_t} \ge \mathcal{P}_i^{\calM,\eps_t} - \eps_t m $. Taking the limit as $t \to \infty$ yields $V_i(\pi^*) \ge P_i^{\calM}(\pi^*, \mathbf{b}^*)$.
            \item[(c)] \textbf{Maximal autobidding:} Suppose the ROI constraint is slack, i.e., $V_i(\pi^*) > P_i^{\calM}(\pi^*, \mathbf{b}^*)$. Then for sufficiently large $t$, we must have $\mathcal{V}_i^{\eps_t} > \mathcal{P}_i^{\calM,\eps_t}$. By the smoothed equilibrium condition, this implies $\alpha_i^{(\eps_t)} = A$. Taking the limit, we get $\alpha_i^* = A$.
        \end{enumerate}
        Thus, $(\boldsymbol{\alpha}^*, \pi^*)$ satisfies all conditions of an autobidding equilibrium.
\end{proof}

\section{Price of Anarchy}

In this section, we prove our second main result: tight PoA bounds for GSP and VCG.

\begin{theorem}[Price of Anarchy for GSP and VCG]\label{theorem: poa for GSP}
Consider a position auction $G$ with payment rule $\mathcal{M}\in\{\mathrm{GSP},\mathrm{VCG}\}$. For any autobidding equilibrium $(\alpha, \pi)$ satisfying \Cref{definition: pacing equilibrium GSP}, the expected social welfare satisfies
$$
\mathrm{Wel}(\pi) \ge \frac{1}{2} \mathrm{Wel}^{\mathsf{opt}}.
$$
Therefore, the Price of Anarchy of autobidding equilibria in GSP and VCG is at most $2$. Furthermore, this bound is tight for both mechanisms.
\end{theorem}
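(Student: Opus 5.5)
The plan follows the smoothness template from the introduction. Summing condition (b) of \Cref{definition: pacing equilibrium GSP} over bidders gives $\mathrm{Wel}(\pi)\ge \sum_i P_i^{\calM}(\pi,\mathbf b)=:\mathrm{Rev}(\pi)$. It therefore suffices to prove, for every allocation $a$ in the support of $\pi$ (valid under $\mathbf b=\alpha v$),
\begin{equation*}
\mathrm{Wel}(a)+\mathrm{Rev}(a)\ge \mathrm{Wel}^{\mathsf{opt}}.
\end{equation*}
Taking expectations over $a\sim\pi$ then gives $2\,\mathrm{Wel}(\pi)\ge\mathrm{Wel}^{\mathsf{opt}}$. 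Two facts will be used throughout: every bid satisfies $b_{ij}\ge v_{ij}$ because $\alpha_i\ge 1$, and within each bidder the items are ordered by value, since $v_{i1}\ge\dots\ge v_{im}$ and the multiplier is uniform.

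\textbf{Constructing the benchmark.} Fix $a$. I build an optimal allocation $a^{\mathsf{opt}}$ adapted to $a$. An optimal allocation sorts items by true value. Among all value-sorted orders, I break ties so that $a^{\mathsf{opt}}$ agrees with $a$ as much as possible, for instance by preferring items that appear earlier in $a$ and, within a bidder, keeping the index order. For each $k\le K$, let $L_k$ be the set of items in $a$'s top $k$ that are not in $a^{\mathsf{opt}}$'s top $k$, and let $O_k$ be the reverse set, so $|L_k|=|O_k|$.

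The key lemma is \emph{disjoint ownership}: no bidder owns items in both $L_k$ and $O_k$. Suppose bidder $i$ had $(i,j)\in L_k$ and $(i,j')\in O_k$. Since $a$ ranks bidder $i$'s items in index order, $j<j'$. Since $a^{\mathsf{opt}}$ (with the tie-breaking above) also ranks them in index order, $(i,j)$ would precede $(i,j')$ in $a^{\mathsf{opt}}$. That contradicts $(i,j')$ being in its top $k$ while $(i,j)$ is not.

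\textbf{Charging displaced value to payments.} Write $\mathrm{Wel}^{\mathsf{opt}}-\mathrm{Wel}(a)$ as a telescoping sum over the slot differences $c_k-c_{k+1}$, with $c_{K+1}=0$. This gives
\begin{equation*}
\mathrm{Wel}^{\mathsf{opt}}-\mathrm{Wel}(a)=\sum_k (c_k-c_{k+1})\bigl(v(O_k)-v(L_k)\bigr),
\end{equation*}
so it suffices to show that for each $k$ the payments cover $\sum_k(c_k-c_{k+1})v(O_k)$.

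\emph{GSP.} Every item of $O_k$ lies below position $k$ in $a$ and, by disjoint ownership, belongs to a bidder other than the owners of $L_k$. Each $L_k$ item at position $\ell\le k$ is therefore priced at least by the bids of $O_k$ items. The idea is to match $L_k$ and $O_k$ greedily in decreasing order of bid. Then the $r$-th largest per-click price among $L_k$ is at least the $r$-th largest bid in $O_k$, which in turn is at least the corresponding value. Summing $c_\ell\tau$ via the same telescoping then gives $\mathrm{Rev}(a)\ge\sum_k(c_k-c_{k+1})v(O_k)$. I expect care is needed here: a single price-setter may serve many slots of one bidder, so the matching must be done per prefix $k$ and then summed.

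\emph{VCG.} For each bidder $i$, consider the counterfactual allocation that removes $i$ and promotes the others. The externality $P_i^{\mathrm{VCG}}(a,\mathbf b)$ is at least the bid-welfare gained by inserting $O_k$-type items of other bidders into $i$'s vacated slots. Disjoint ownership ensures these items are not $i$'s own. I then sum over bidders and use the same telescoping bound. Alternatively, one can use that VCG revenue per bidder dominates a GSP-like lower bound here.

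\textbf{Tightness and the main obstacle.} For tightness I reuse the standard single-slot example from \cite{aggarwal2019autobidding}: $K=1$, $m=1$, two bidders, where GSP and VCG both reduce to the second-price auction. Bidder 1 has value $1$ and bidder 2 has value $\eps$. An equilibrium exists in which bidder 2 scales up to $\alpha_2\approx 1/\eps$, wins, and pays about $1\cdot$ a capped amount, which yields welfare ratio approaching $2$ after padding with a second, small item. I will adapt the precise numbers so that the ROI constraints bind.

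The main obstacle is the benchmark construction combined with the disjoint ownership lemma. Ties in value, together with the fact that the ordering in $a$ is by bid and not by value, mean that the choice of $a^{\mathsf{opt}}$ must be made carefully so that the lemma holds at every prefix $k$ simultaneously.
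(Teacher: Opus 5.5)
\textbf{The gap: the tightness construction fails.} With a single slot, the Price of Anarchy in this model is exactly $1$, however many items the bidders have, so no $K=1$ instance, padded or not, can work. The item in slot $1$ carries the top bid, so it is its owner's highest-value item, because bids within a bidder are $\alpha_i v_{ij}$. Under both GSP and VCG, its owner $i$ pays $c_1\max_{i'\neq i,\,j} b_{i'j}\ge c_1\max_{i'\neq i,\,j} v_{i'j}$ when winning and $0$ otherwise. Hence the ROI constraint, even in expectation over $\pi$, forces every bidder who wins with positive probability to own a maximum-value item. Your scenario, in which bidder 2 with value $\varepsilon$ wins and pays about $1$, simply violates ROI.

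The missing idea is cross-subsidization across slots of the same auction. A bidder whose top item earns large ROI slack can inflate its multiplier so that its low-value items take the remaining slots from a competitor with higher values. The paper's instance uses $K$ slots with $c_k=1$. Bidder 1 has values $(K,\varepsilon,\dots,\varepsilon)$ and bidder 2 has $K$ items of value $1$. Take $\alpha_1\ge 1/\varepsilon$, with ties broken for bidder 1, and $\alpha_2=1$. Then bidder 1 takes all $K$ slots and pays $K$ in total under both GSP (each item is priced at bidder 2's bid $1$) and VCG. Its value is $K+(K-1)\varepsilon$, and bidder 2 has $0=0$. Since bidder 1's ROI is slack, condition (c) requires $\alpha_1=A$; this is harmless for $A\ge 1/\varepsilon$, because raising $\alpha_1$ changes neither the allocation nor the prices. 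The ratio $(2K-1)/(K+(K-1)\varepsilon)$ tends to $2$ only as $\varepsilon\to 0$ and $K\to\infty$, so you genuinely need many slots.

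\textbf{The upper bound.} This half is essentially the paper's proof: the same value-sorted benchmark with ties broken by position in $a$, the same disjoint-ownership claim, and the same per-prefix charging with weights $c_k-c_{k+1}$. Two points need fixing.

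First, your justification of disjoint ownership asserts that $a$ lists each bidder's items in index order. This fails when $v_{ij}=v_{ij'}$, since any order among equal bids is valid, and your extra ``index order'' tie-break can then contradict $a$ and break the claim. Drop it and argue directly: $a$ ranks $(i,j)$ above $(i,j')$, so $v_{ij}\ge v_{ij'}$, and $a^{\mathsf{opt}}$ breaks value ties by $a$'s order, hence it also ranks $(i,j)$ first.

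Second, for VCG a one-shot ``insertion'' counterfactual does not give the layered bound. Write $b(S)$ for the total bid of a set $S$. Use the identity $P_i^{\mathrm{VCG}}(a,\mathbf b)=\sum_k (c_k-c_{k+1})\,b(R_k(i))$, where $R_k(i)$ consists of the highest-bid items outside $S_k(a)$ not owned by $i$, one for each of $i$'s items in $S_k(a)$. Then split $O_k$ into sets $Z_i$ of size $|L_k\cap(\{i\}\times[m])|$, and use disjoint ownership to get $b(R_k(i))\ge b(Z_i)$. Your alternative claim that VCG revenue dominates a GSP-type bound is false in general and is not needed.
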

\begin{proof}
We prove the theorem by establishing the inequality
$$2\cdot \mathrm{Wel}(\pi) \ge \mathrm{Wel}(\pi) + \mathrm{Rev}^{\calM}(\pi) \ge \mathrm{Wel}^{\mathsf{opt}},$$
where $\mathrm{Rev}^{\calM}(\pi) = \sum_{i=1}^n P_i^{\calM}(\pi, \mathbf b)$ is the total expected revenue. We show each inequality separately below.

The first claim establishes that, in the aggregate, the expected social welfare is lower-bounded by the expected revenue.
\begin{claim}\label{claim: simpleee GSP}
    $\mathrm{Wel}(\pi) \ge \mathrm{Rev}^{\calM}(\pi)$.
\end{claim}
\begin{proof}
    From the definition of autobidding equilibrium (\Cref{definition: pacing equilibrium GSP}), the ROI feasibility condition requires that for every bidder $i$, $V_i(\pi) \ge P_i^{\calM}(\pi, \mathbf b)$.
Summing this inequality over all bidders $i\in [n]$, we obtain:
\[
\sum_{i=1}^n V_i(\pi) \ge \sum_{i=1}^n P_i^{\calM}(\pi, \mathbf b) \implies \mathrm{Wel}(\pi) \ge \mathrm{Rev}^{\calM}(\pi).
\]
\end{proof}

Next, it is clear that the second inequality $\mathrm{Wel}(\pi)+\mathrm{Rev}^{\calM}(\pi)\geq \mathrm{Wel}^{\mathsf{opt}}$ is true as long as point-wise the social welfare plus the revenue is lower-bounded by the optimal social welfare. To this end, we prove the following critical lemma and postpone its proof.

\begin{restatable}{lemma}{usefullemmaGSP}\label{lemma: useful hahaha}
    Let $\calM\in\set{\mathrm{GSP},\mathrm{VCG}}$. Fix bids ${\mathbf b}$ and a valid allocation $a\in\Omega$ under ${\mathbf b}$. There exists an optimal allocation $a^{\textsf{opt}}$ with respect to the true valuations such that 
    \begin{equation*}
    \mathrm{Rev}^{\calM}(a,\mathbf b)\geq \sum_{k=1}^K (c_k-c_{k+1})\cdot \left[\sum_{(i,j)\in S_k(a^{\textsf{opt}})\setminus S_k(a)}v_{ij}\right],
    \end{equation*}
    where $S_k(a)$ denotes the set of items assigned to the first $k$ slots in allocation $a$, and we define $c_{K+1} = 0$.
\end{restatable}

Given \Cref{lemma: useful hahaha}, we are ready to establish the following claim.
\begin{claim}
    For any allocation $a$ in the support of $\pi$, we have $\mathrm{Wel}(a) + \mathrm{Rev}^{\calM}(a) \ge \mathrm{Wel}^{\mathsf{opt}}$.
\end{claim}
\begin{proof}
Fix any allocation $a$ in the support of $\pi$. 
Recall that the welfare of allocation $a$ is $\mathrm{Wel}(a) = \sum_{k=1}^K c_k \cdot v_{a(k)}$. We can rewrite the welfare as:
\[
\mathrm{Wel}(a) = \sum_{k=1}^K (c_k - c_{k+1}) \cdot \sum_{t=1}^k v_{a(t)} = \sum_{k=1}^K (c_k - c_{k+1}) \sum_{(i,j)\in S_k(a)} v_{ij},
\]
where $S_k(a)$ is the set of items allocated to the first $k$ slots in $a$ (the same notation applies to other allocations used below).

By \Cref{lemma: useful hahaha}, there exists an optimal allocation $a^{\mathsf{opt}}$ (which depends on $a$) such that the total revenue $\mathrm{Rev}^{\calM}(a, \mathbf{b}) \coloneqq \sum_i P_i^{\calM}(a, \mathbf{b})$ satisfies:
\[
\mathrm{Rev}^{\calM}(a,\mathbf b)\geq \sum_{k=1}^K (c_k-c_{k+1})\cdot \left[\sum_{(i,j)\in S_k(a^{\mathsf{opt}})\setminus S_k(a)}v_{ij}\right].
\]
Adding $\mathrm{Wel}(a)$ and $\mathrm{Rev}^{\calM}(a, \mathbf b)$ together:
\begin{align*}
    \mathrm{Wel}(a) + \mathrm{Rev}^{\calM}(a, \mathbf b) 
    &\ge \sum_{k=1}^K (c_k - c_{k+1}) \left[ \sum_{(i,j)\in S_k(a)} v_{ij} + \sum_{(i,j)\in S_k(a^{\mathsf{opt}})\setminus S_k(a)}v_{ij} \right] \\
    &= \sum_{k=1}^K (c_k - c_{k+1}) \left[ \sum_{(i,j)\in S_k(a) \cup S_k(a^{\mathsf{opt}})} v_{ij} \right].
\end{align*}
Since valuations are non-negative, the sum over the union $S_k(a) \cup S_k(a^{\mathsf{opt}})$ is at least the sum over $S_k(a^{\mathsf{opt}})$. Thus:
\[
\mathrm{Wel}(a) + \mathrm{Rev}^{\calM}(a, \mathbf b) \ge \sum_{k=1}^K (c_k - c_{k+1}) \sum_{(i,j)\in S_k(a^{\mathsf{opt}})} v_{ij}= \mathrm{Wel}(a^{\mathsf{opt}}).
\]
Crucially, since $a^{\mathsf{opt}}$ is an optimal allocation, $\mathrm{Wel}(a^{\mathsf{opt}}) = \mathrm{Wel}^{\mathsf{opt}}$ regardless of the specific $a^{\mathsf{opt}}$ chosen by \Cref{lemma: useful hahaha}. Therefore, for every $a$ in the support of $\pi$:
\begin{equation}
    \label{eq:pointwise_bound}
    \mathrm{Wel}(a) + \mathrm{Rev}^{\calM}(a, \mathbf b) \ge \mathrm{Wel}^{\mathsf{opt}}.
\end{equation}
\end{proof}

Thus, we derive the second inequality via expectation: Taking the expectation of \eqref{eq:pointwise_bound} over $a \sim \pi$:
\[
\mathbb{E}_{a\sim \pi}[\mathrm{Wel}(a)] + \mathbb{E}_{a\sim \pi}[\mathrm{Rev}^{\calM}(a, \mathbf b)] \ge \mathrm{Wel}^{\mathsf{opt}} 
\implies \mathrm{Wel}(\pi) + \mathrm{Rev}^{\calM}(\pi) \ge \mathrm{Wel}^{\mathsf{opt}}.
\]
Combining this with \Cref{claim: simpleee GSP}, we conclude:
\[
2\cdot \mathrm{Wel}(\pi) \ge \mathrm{Wel}(\pi) + \mathrm{Rev}^{\calM}(\pi) \ge \mathrm{Wel}^{\mathsf{opt}} \implies \mathrm{Wel}(\pi) \ge \frac{1}{2} \mathrm{Wel}^{\mathsf{opt}}.
\]
This finishes the proof.
\end{proof}

To conclude the proof of \Cref{theorem: poa for GSP}, it remains only to establish \Cref{lemma: useful hahaha}. As the proof of this lemma is technical and differs between the two mechanisms, we defer it to the subsequent subsections. Before that, we first present an example establishing the tightness of the bound in \Cref{theorem: poa for GSP}.

\begin{example}[Tightness of the PoA Bound]
    \label{ex:tightness}
    Consider a setting with $n=2$ bidders and $K$ slots, where all slots have unit click-through rates ($c_k=1$ for all $k \in [K]$).
    \begin{itemize}
        \item Bidder 1 has $K$ items with valuations $v_{11}=K$ and $v_{1j}=\eps$ for $j \in \{2, \dots, K\}$, where $\eps > 0$ is a small constant.
        \item Bidder 2 has $K$ items with valuations $v_{2j}=1$ for all $j \in [K]$.
    \end{itemize}
    Consider the multipliers $\alpha_1 = 1/\eps$ and $\alpha_2 = 1$.
    The resulting bids are:
    \begin{itemize}
        \item Bidder 1 bids $b_{11} = K/\eps$ and $b_{1j} = 1$ for $j > 1$.
        \item Bidder 2 bids $b_{2j} = 1$ for all $j$.
    \end{itemize}
    Suppose ties are broken in favor of Bidder 1. Then, Bidder 1 wins all $K$ slots.

    \medskip
    \textbf{Equilibrium Verification:}
    Bidder 1's total value is $V_1 = K + (K-1)\eps$.
    In this instance, under GSP or VCG, Bidder 1's total payment is $P_1 = K$.
    The ROI constraint is satisfied (and becomes tight as $\eps \to 0$):
    \[
        V_1 = K + (K-1)\eps \ge K = P_1.
    \]
    On the other hand, Bidder 2 has both value and payment being 0. Thus, this constitutes a valid autobidding equilibrium. The equilibrium social welfare is $\mathrm{Wel}(\pi) = K + (K-1)\eps$.

    \medskip
    \textbf{Optimal Social Welfare:}
    The welfare-maximizing allocation assigns Bidder 1's first item (value $K$) to the first slot, and Bidder 2's items (value $1$) to the remaining $K-1$ slots.
    \[
        \mathrm{Wel}^{\mathsf{opt}} = K + (K-1) \cdot 1 = 2K - 1.
    \]

    \medskip
    \textbf{Price of Anarchy:}
    The ratio of the optimal welfare to the equilibrium welfare is:
    \[
        \frac{\mathrm{Wel}^{\mathsf{opt}}}{\mathrm{Wel}(\pi)} = \frac{2K - 1}{K + (K-1)\eps}.
    \]
    Taking the limit as $\eps \to 0$ and $K \to \infty$, the ratio approaches:
    \[
        \lim_{K \to \infty} \lim_{\eps \to 0} \frac{2K - 1}{K + (K-1)\eps} = \lim_{K \to \infty} \frac{2K - 1}{K} = 2.
    \]
    This demonstrates that the PoA bound of 2 is asymptotically tight.
\end{example}

\subsection{Proof of \Cref{lemma: useful hahaha}: GSP part}
In this subsection we focus on the GSP part in \Cref{lemma: useful hahaha}. We first restate it below.
\begin{lemma}
    Fix bids ${\mathbf b}$ and a valid allocation $a\in\Omega$ under ${\mathbf b}$. There exists an optimal allocation $a^{\textsf{opt}}$ with respect to the true valuations such that 
    \begin{equation*}
    \mathrm{Rev}^{\mathrm{GSP}}(a,\mathbf b)\geq \sum_{k=1}^K (c_k-c_{k+1})\cdot \left[\sum_{(i,j)\in S_k(a^{\textsf{opt}})\setminus S_k(a)}v_{ij}\right],
    \end{equation*}
    where $S_k(a)$ denotes the set of items assigned to the first $k$ slots in allocation $a$, and we define $c_{K+1} = 0$.
\end{lemma}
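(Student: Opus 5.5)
The plan is to choose $a^{\textsf{opt}}$ so that it ranks each bidder's items in the same order as $a$ does. Then show that, for every prefix $k$, the items in $S_k(a^{\textsf{opt}})\setminus S_k(a)$ can be matched one-to-one with items of $S_k(a)\setminus S_k(a^{\textsf{opt}})$ owned by \emph{different} bidders. Each matched winner's GSP price is then at least the bid, hence at least the value, of its displaced partner.

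\textbf{Step 1 (rewrite revenue by prefixes).} For the allocated items, write $\tau_t \coloneqq \tau^{\mathrm{GSP}}_{a(t)}(a,\mathbf b)\ge 0$. Then
\[
\mathrm{Rev}^{\mathrm{GSP}}(a,\mathbf b)=\sum_{k=1}^K c_k\tau_k=\sum_{k=1}^K (c_k-c_{k+1})\sum_{t\le k}\tau_t .
\]
Since $c_k-c_{k+1}\ge 0$, it suffices to show, for each $k\in[K]$,
\[
\sum_{t\le k}\tau_t \;\ge \sum_{(i,j)\in S_k(a^{\textsf{opt}})\setminus S_k(a)} v_{ij}.
\]

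\textbf{Step 2 (construct $a^{\textsf{opt}}$).} Let $a^{\textsf{opt}}$ sort all $nm$ items by true value $v_{ij}$ in non-increasing order, breaking ties by rank in $a$ (the earlier-ranked item in $a$ goes first). Any value-sorted order maximizes $\sum_k c_k v_{a(k)}$ because $c$ is non-increasing, so this $a^{\textsf{opt}}$ is optimal.

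Now fix a bidder $i$. Since $b_{ij}=\alpha_i v_{ij}$ with $\alpha_i\ge 1>0$ and $a$ is valid under $\mathbf b$, the items of $i$ appear in $a$ in non-increasing order of value. With the tie-breaking above, $a^{\textsf{opt}}$ ranks $i$'s items in exactly the same relative order as $a$. Hence $S_k(a)\cap(\{i\}\times[m])$ and $S_k(a^{\textsf{opt}})\cap(\{i\}\times[m])$ are both prefixes of one common linear order on $i$'s items, so one contains the other. Consequently no bidder owns items in both $D_k\coloneqq S_k(a)\setminus S_k(a^{\textsf{opt}})$ and $E_k\coloneqq S_k(a^{\textsf{opt}})\setminus S_k(a)$. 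This is the disjoint ownership property.

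\textbf{Step 3 (matching and price bound).} Since $|S_k(a)|=|S_k(a^{\textsf{opt}})|=k$, we have $|D_k|=|E_k|$; fix any bijection $\phi:E_k\to D_k$.

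Take $y=(i',j')\in E_k$, and let $x=\phi(y)$, owned by bidder $i$ and sitting at position $t\le k$ in $a$. Then $y\notin S_k(a)$, so $y$ is ranked strictly below position $k\ge t$ in $a$. By Step 2, $i'\neq i$. The GSP price of $x$ is the maximum bid among lower-ranked items of other bidders, so $\tau_t\ge b_{y}=\alpha_{i'}v_{y}\ge v_y$, using $\alpha_{i'}\ge 1$ and $v_y\ge 0$.

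Summing over $y\in E_k$, and using $\tau_t\ge 0$ for the positions $t\le k$ outside $D_k$, gives the prefix inequality of Step 1. This completes the argument.

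\textbf{Main obstacle.} The delicate point is Step 2. An arbitrary optimal allocation can break value ties within a bidder differently from $a$, and then a bidder could appear on both sides of the exchange. Its displaced item would then be unable to set the price of its own winning item, because of no self-pricing. The consistent tie-breaking removes this. I expect the remaining details to be only bookkeeping: items of value zero, positions beyond $K$ (where $\tau$ is defined as $0$), and checking that value-sorting with arbitrary ties is indeed optimal.
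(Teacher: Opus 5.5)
Your proof is correct and follows the paper's argument: the same value-sorted $a^{\textsf{opt}}$ with ties broken by rank in $a$, the same prefix decomposition of revenue, and the same disjoint-ownership property that makes every displaced optimal item a valid GSP price-setter for every mechanism-only winner. The differences are cosmetic: you derive disjoint ownership from nested prefixes of a common per-bidder order rather than by contradiction, and you pair items via a bijection where the paper bounds every such price below by $\max_{O_k} b$.
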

\begin{proof}
    Given the bid profile $\mathbf{b}$ and the valid allocation $a$ under $\mathbf{b}$, we construct the optimal allocation $a^{\textsf{opt}}$ with respect to the true valuations $(v_{ij})$ as follows: rank all items in descending order of their valuations $v_{ij}$. To handle ties where $v_{ij} = v_{i',j'}$, we break ties in favor of the item that appears earlier in the allocation order $a$. Let $a^{\textsf{opt}}$ be the allocation resulting from this ordering. Note that for the proof, it is crucial to choose $a^{\textsf{opt}}$ based on $a$.
    
    Recall that the GSP payment for the item at slot $k$, denoted by $a(k)$, is defined as $p_{a(k)}(a, \mathbf{b}) = c_k \cdot \tau_k(a, \mathbf{b})$, where $\tau_k(a, \mathbf{b})$ is the per-click price determined by the mechanism, i.e., the highest bid from a different bidder among items ranked below $k$ (in this proof, we drop the superscript of $\mathrm{GSP}$ for $\tau_{ij}^{\mathrm{GSP}}(a,\mathbf{b})$ and $p_{ij}^{\mathrm{GSP}}(a,\mathbf{b})$ as it is clear in this context). 
    
    The total revenue is the sum of payments and we can rewrite the revenue as:
    \begin{align*}
        \mathrm{Rev}^{\mathrm{GSP}}(a, \mathbf{b}) &= \sum_{k=1}^K p_{a(k)}(a, \mathbf{b}) = \sum_{k=1}^K c_k \cdot \tau_{a(k)}(a, \mathbf{b}) \\
        &= \sum_{k=1}^K (c_k - c_{k+1}) \cdot \left( \sum_{t=1}^k \tau_{a(t)}(a, \mathbf{b}) \right).
    \end{align*}
    Thus, to prove the lemma, it suffices to show that for every $k \in [K]$, the cumulative \emph{per-click prices} of the first $k$ slots cover the value of the items in $S_k(a^*)$ that are missing from $S_k(a)$:
    \[
        \sum_{t=1}^k \tau_{a(t)}(a, \mathbf{b}) \ge \sum_{(i,j) \in S_k(a^{\mathsf{opt}}) \setminus S_k(a)} v_{ij}.
    \]

    Fix an arbitrary $k$. Let $M_k \coloneqq S_k(a) \setminus S_k(a^{\textsf{opt}})$ be the set of items in the mechanism's top $k$ but not in the optimal set, and let $O_k \coloneqq S_k(a^{\textsf{opt}}) \setminus S_k(a)$ be the set of items in the optimal top $k$ but displaced from the mechanism's allocation. Note that $|M_k| = |O_k|$.
\begin{restatable}{claim}{DisjointOwnership}\label{claim: Disjoint Ownership}
For any item $(i,j)\in M_k$ and any item $(i',j')\in O_k$, it must hold that 
$i \neq i'$.
\end{restatable}
\begin{proof}
Assume for contradiction that a single bidder $i$ owns both
$(i,j)\in M_k$ and $(i,j')\in O_k$. Then by definition we know the following:

\begin{itemize}
    \item Since $(i,j)\in S_k(a)$ and $(i,j')\notin S_k(a)$, the allocation $a$
    ranks $(i,j)$ above $(i,j')$. Because within a bidder the induced bids are
    multiplicative and monotone, this implies $b_{ij} \ge b_{ij'}.$ Since $b_{ij}=\alpha_i\cdot v_{ij}$ and $b_{ij'}=\alpha_i\cdot v_{ij'}$, we know that $v_{ij} \ge v_{ij'}.$
    \item Conversely, since $(i,j')\in S_k(a^{\mathsf{opt}})$ and
    $(i,j)\notin S_k(a^{\mathsf{opt}})$, the optimal allocation
    $a^{\mathsf{opt}}$ ranks $(i,j')$ above $(i,j)$, which implies $v_{ij'} \ge v_{ij}.$
\end{itemize}

Combining the two inequalities, it must be that $v_{ij'} = v_{ij}$. Then by our tie-breaking rule, $a^{\mathsf{opt}}$ must break ties consistently with $a$. Since the allocation $a$ ranks $(i,j)$ above $(i,j')$, the same must hold in $a^{\mathsf{opt}}$, contradicting $(i,j')\in S_k(a^{\mathsf{opt}})$ and $(i,j)\notin S_k(a^{\mathsf{opt}})$. Thus, no bidder can simultaneously
own an item in $M_k$ and an item in $O_k$, and the two owner sets are disjoint.
\end{proof}

Now consider any item $(i,j)\in M_k$. Under the GSP rule (with no self-pricing), the price of the item
$(i,j)$ is the highest bid among all items that have rank strictly below $(i,j)$ in the allocation $a$ and are owned by other bidders.

By \Cref{claim: Disjoint Ownership}, every item $(i',j')\in O_k$ satisfies the condition above.
Thus each $(i',j')\in O_k$ is a valid price-setter for each $(i,j)\in M_k$.

Let $\bar{b}=\max_{(i',j')\in O_k} b_{i'j'}.$ Then for every $(i,j)\in M_k$, we have $\tau_{ij}(a,\mathbf b) \ge \bar{b}.$

Summing over all $(i,j)\in M_k$ gives
\[
\sum_{(i,j)\in M_k} \tau_{ij}(a,\mathbf b)
\ge
|M_k|\, \bar{b}
\;=\;
|O_k|\, \bar{b}
\;=\;
\sum_{(i',j')\in O_k} \bar{b}\geq \sum_{(i',j')\in O_k} b_{i'j'}.
\]

Since multipliers satisfy $\alpha_{i'}\ge 1$ for all $i'$, we have
$b_{i'j'} = \alpha_{i'} v_{i'j'} \ge v_{i'j'}$. Therefore,
\[
\sum_{(i,j)\in M_k} \tau_{ij}(a,\mathbf b)
\geq \sum_{(i',j')\in O_k} v_{i'j'}=\sum_{(i',j') \in S_k(a^{\mathsf{opt}}) \setminus S_k(a)} v_{i'j'}.
\]

Finally, since all prices are nonnegative,
we have
\[
\sum_{t\in [k]} \tau_{a(t)}(a,\mathbf b)
\geq \sum_{(i',j') \in S_k(a^{\mathsf{opt}}) \setminus S_k(a)} v_{i'j'}.
\]
the total payment from the top $k$
slots exceeds (or equals) the payments from the subset $M_k$, completing the
argument.
\end{proof}

\subsection{Proof of \Cref{lemma: useful hahaha}: VCG part}
In this subsection, we focus on the VCG part in \Cref{lemma: useful hahaha}.
\begin{lemma}
\label{lemma:VCG_smoothness}
Fix bids $\mathbf{b}$ and a valid allocation $a \in \Omega$ under $\mathbf{b}$. There exists an optimal allocation $a^{\mathsf{opt}}$ with respect to the true valuations such that
\[
\mathrm{Rev}^{\mathrm{VCG}}(a, \mathbf{b}) \ge \sum_{k=1}^K (c_k - c_{k+1}) \cdot \left[ \sum_{(i,j) \in S_k(a^{\mathsf{opt}}) \setminus S_k(a)} v_{ij} \right],
\]
where $S_k(a)$ denotes the set of items assigned to the first $k$ slots in allocation $a$, and we define $c_{K+1}=0$.
\end{lemma}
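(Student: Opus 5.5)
The plan is to reuse the optimal allocation $a^{\mathsf{opt}}$ constructed in the GSP part: rank items by true value, breaking ties in favor of the item that appears earlier in $a$. The proof of \Cref{claim: Disjoint Ownership} uses only this construction and the fact that $a$ is valid under uniform-multiplier bids. So for every $k$, the sets $M_k=S_k(a)\setminus S_k(a^{\mathsf{opt}})$ and $O_k=S_k(a^{\mathsf{opt}})\setminus S_k(a)$ again have disjoint owner sets, and $|M_k|=|O_k|$. It will then suffice to show, for each $k$, that the VCG revenue contains a prefix term of weight $(c_k-c_{k+1})$ that is at least $\sum_{(i',j')\in O_k} b_{i'j'}$. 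The lemma follows from $b_{i'j'}=\alpha_{i'}v_{i'j'}\ge v_{i'j'}$, exactly as in the GSP case.

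\textbf{Step 1: prefix decomposition of each bidder's VCG payment.} For any ordering, bid-welfare telescopes as $\sum_{k}(c_k-c_{k+1})\cdot(\text{sum of bids in the first }k\text{ positions})$. Apply this to both the counterfactual optimum and the realized welfare of others. Let $d_i^k$ be the number of bidder $i$'s items in $S_k(a)$.
\begin{itemize}
\item Because $a$ is sorted by bids, the others' items in $S_k(a)$ (there are $k-d_i^k$ of them) are among the top bids of the other bidders.
\item Hence the top-$k$ bids of the others, which is what $W^{\mathsf{opt}}_{-i}$ uses in its $k$-th prefix, consist of these items plus the $d_i^k$ highest bids of other bidders' items ranked outside $S_k(a)$. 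Ties cause no harm, since only bid sums matter.
\end{itemize}
Therefore
\[
P_i^{\mathrm{VCG}}(a,\mathbf b)=\sum_{k=1}^K (c_k-c_{k+1})\, D_i^k,
\]
where $D_i^k\ge 0$ is the sum of the $d_i^k$ largest bids among items outside $S_k(a)$ not owned by $i$ (taken as $0$ for missing items).

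\textbf{Step 2: per-prefix covering.} Fix $k$ and let $m_i$ be the number of items of $M_k$ owned by bidder $i$, so $\sum_i m_i=|M_k|=|O_k|$ and $d_i^k\ge m_i$.
\begin{itemize}
\item Every item of $O_k$ lies outside $S_k(a)$.
\item By disjoint ownership, whenever $m_i>0$, no item of $O_k$ belongs to bidder $i$.
\end{itemize}
So $D_i^k$ is at least the sum of the $m_i$ largest bids in $O_k$, which is at least $m_i\cdot \bar b_{O_k}$, where $\bar b_{O_k}$ is the average bid in $O_k$. Summing over $i$ gives
\[
\sum_i D_i^k\ \ge\ |O_k|\,\bar b_{O_k}\ =\ \sum_{(i',j')\in O_k} b_{i'j'}\ \ge\ \sum_{(i',j')\in O_k} v_{i'j'}.
\]
Multiplying by $(c_k-c_{k+1})\ge 0$ and summing over $k$ yields the lemma.

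The step I expect to need the most care is Step 1: verifying rigorously that the counterfactual top-$k$ set of others is exactly the others' part of $S_k(a)$ plus the best $d_i^k$ outside items, including when there are ties in bids and when fewer than $k$ items of others exist. I would handle ties by noting that any valid ordering gives the same prefix bid sums, and handle a shortage of items by padding with zero bids.
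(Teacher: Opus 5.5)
Your proposal is correct and follows the paper's proof essentially step for step. It uses the same value-sorted $a^{\mathsf{opt}}$ with ties broken by $a$, and the same prefix decomposition of each VCG payment into replacement-item bids: your $D_i^k$ is the paper's $\Delta_k(i)=\sum_{(u,j)\in R_k(i)} b_{uj}$, and you correctly note $|R_k(i)|=d_i^k\ge m_i$. It also uses disjoint ownership in the same way, to show these replacements dominate the bids in $O_k$. The only difference is cosmetic: you finish with the averaging bound $m_i\,\bar b_{O_k}$, while the paper partitions $O_k$ into sets $Z_i$ with $|Z_i|=m_i$, and both are equally valid.
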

\begin{proof}
    We construct $a^{\mathsf{opt}}$ exactly as in the previous proof of \Cref{lemma: useful hahaha}: rank items by true valuations $v_{ij}$, breaking ties consistently with the bid-based allocation $a$.

    We first rewrite the revenue.
    
    The VCG revenue is the sum of payments: $\mathrm{Rev}^{\calM}(a, \mathbf{b}) = \sum_{i} P_i^{\mathrm{VCG}}(a,\mathbf{b})$. 
    By definition, the payment for bidder $i$ is the externality imposed on others:
    \[
    P_i^{\mathrm{VCG}}(a,\mathbf{b}) = W^{\mathsf{opt}}_{-i}(\mathbf{b}) - W_{-i}(a, \mathbf{b}).
    \]

    Recall the definition of bid-welfare for bidders other than $i$ in allocation $a$:
    \[
    W_{-i}(a, \mathbf{b}) = \sum_{k=1}^K c_k \cdot \left(  \mathbf{1}[a(k) \notin \{i\}\times[m]] \cdot b_{a(k)} \right).
    \]
    We can rewrite it as follows:
    \[
    W_{-i}(a, \mathbf{b}) = \sum_{k=1}^K (c_k - c_{k+1}) \left[ \sum_{(u,j) \in S_k(a)} \mathbf{1}[u \neq i] \cdot b_{uj} \right],
    \]
    where the inner partial sum represents the total bid value of items in the top $k$ slots of $a$ that do \emph{not} belong to bidder $i$.
    
    Similarly, let the set $S_k(a_{-i})$ be the optimal set of top $k$ items in $a$ when $i$ is excluded. The optimal bid-welfare without $i$ is:
    \[
    W^{\mathsf{opt}}_{-i}(\mathbf{b}) = \sum_{k=1}^K (c_k - c_{k+1}) \left[ \sum_{(u,j) \in S_k(a_{-i})} b_{uj} \right].
    \]
    
    Substituting these into the payment equation, we group terms by the layer weight $(c_k - c_{k+1})$:
    \[
    P_i^{\mathrm{VCG}}(a,\mathbf{b}) = \sum_{k=1}^K (c_k - c_{k+1}) \underbrace{\left[ \sum_{(u,j) \in S_k(a_{-i})} b_{uj} - \sum_{(u,j) \in S_k(a), u \neq i} b_{uj} \right]}_{\Delta_k(i)}.
    \]
    
    Next, we analyze the term $\Delta_k(i)$. 
    Recall that the set $S_k(a_{-i})$ is the optimal set of top $k$ items when $i$ is excluded. Since allocation is based on bid ranking, this set is constructed by taking the items in $S_k(a)$ that are \emph{not} owned by $i$, and filling the vacancies with the highest-bidding items from outside $S_k(a)$.
    Let $R_k(i) \coloneqq S_k(a_{-i}) \setminus S_k(a)$ be this set of ``replacement items''. Then:
    \[
    S_k(a_{-i}) = \{ (u,j) \in S_k(a) \mid u \neq i \} \cup R_k(i).
    \]
    Substituting this set relationship into $\Delta_k(i)$, the terms in $\{ (u,j) \in S_k(a) \mid u \neq i \}$ cancel out perfectly, leaving only the value of the replacement items:
    \[
    \Delta_k(i) = \sum_{(u,j) \in R_k(i)} b_{uj}.
    \]
    Thus, the total revenue is:
    \begin{equation}
    \label{eq:revenue_structure}
    \mathrm{Rev}^{\calM}(a, \mathbf{b}) = \sum_i P_i^{\mathrm{VCG}}(a,\mathbf{b}) = \sum_{k=1}^K (c_k - c_{k+1}) \left[ \sum_i \sum_{(u,j) \in R_k(i)} b_{uj} \right].
    \end{equation}
    Now we claim that:
    \begin{claim}\label{claim: VCG haha}
    For any $k$,
        \[
    \sum_i \sum_{(u,j) \in R_k(i)} b_{uj}\geq \sum_{(i,j) \in S_k(a^{\mathsf{opt}}) \setminus S_k(a)} b_{ij}.
    \]
    \end{claim}
    Given \Cref{claim: VCG haha} and $b_{ij}\geq v_{ij}$ for all $i$ and $j$, we can conclude that 
    \begin{align*}
       \mathrm{Rev}^{\calM}(a, \mathbf{b}) &= \sum_{k=1}^K (c_k - c_{k+1}) \left[\sum_i \sum_{(u,j) \in R_k(i)} b_{uj}\right] \\
        &\geq \sum_{k=1}^K (c_k - c_{k+1}) \left[ \sum_{(i,j) \in S_k(a^{\mathsf{opt}}) \setminus S_k(a)} b_{ij} \right]\\
        &\geq \sum_{k=1}^K (c_k - c_{k+1}) \left[ \sum_{(i,j) \in S_k(a^{\mathsf{opt}}) \setminus S_k(a)} v_{ij} \right].
    \end{align*}
    Therefore, it remains only to prove \Cref{claim: VCG haha}.

    \begin{proof}[Proof of \Cref{claim: VCG haha}]
        Fix an arbitrary $k$, and recall the sets defined in \Cref{lemma: useful hahaha}: $M_k = S_k(a) \setminus S_k(a^{\mathsf{opt}})$ and $O_k = S_k(a^{\mathsf{opt}}) \setminus S_k(a)$. Note that $|M_k| = |O_k|$. 
    
    We further decompose the set $M_k$ by bidder. Let $M_k(i)$ be the items in $M_k$ owned by bidder $i$, and let $m_i = |M_k(i)|$. Let $U = ([n]\times[m]) \setminus S_k(a)$ be the set of items not in the mechanism's top $k$. 
    
    When bidder $i$ is removed from the auction to calculate $W_{-i}^{\mathsf{opt}}$, all her items are removed. The removal of $i$ creates at least $m_i$ vacancies in the top $k$ (corresponding to the items in $M_k(i)$). These vacancies must be filled by items from outside $S_k(a)$ that are \emph{not} owned by $i$.
    Thus, $R_k(i)$ contains the $|R_k(i)|$ items with the highest bids from the set $U_{-i} \coloneqq U \setminus \{ \text{items owned by } i \}$, where $|R_k(i)| \ge m_i$.

    Now, we invoke \Cref{claim: Disjoint Ownership} again.
    The claim guarantees that the bidders owning items in $M_k$ are distinct from those owning items in $O_k$. 
    This implies that for any $i$ such that $m_i>0$, we have $O_k\cap(\{i\}\times[m])=\emptyset$. In other words, $O_k \subseteq U_{-i}$.

    Since $\sum_i m_i = |M_k| = |O_k|$, we can arbitrarily partition the set $O_k$ into disjoint subsets $\{Z_i\}_i$ such that for each $i$, $|Z_i| = m_i$. 
     
    For each bidder $i$, consider the set $Z_i$. If $m_i>0$ and $Z_i\neq\emptyset$, then $Z_i \subseteq O_k \subseteq U_{-i}$, and $R_k(i)$ contains the highest bidding items from $U_{-i}$ (and $|R_k(i)| \ge m_i = |Z_i|$), the total bid of replacement items must be at least the bid of items in $Z_i$:
    \[
    \sum_{(u,j) \in R_k(i)} b_{uj} \ge \sum_{(u,j) \in Z_i} b_{uj}.
    \]

    Summing over all bidders $i$:
    \[
    \sum_i \sum_{(u,j) \in R_k(i)} b_{uj} \ge \sum_i \sum_{(u,j) \in Z_i} b_{uj} = \sum_{(u,j) \in \bigcup_i Z_i} b_{uj} = \sum_{(u,j) \in O_{k}} b_{uj},
    \]
    which concludes the proof of claim.
    \end{proof}
    This finishes the proof.
\end{proof}

\section{Conclusion and Future Directions}

We established that the equilibrium properties of multi-item sponsored shopping closely parallel those of the classical single-item setting. Despite the combinatorial nature of the allocation, where agents win bundles of slots rather than single positions, the Price of Anarchy for both GSP and VCG remains exactly 2, mirroring the tightness results found in standard single-item autobidding models.  

Two significant theoretical challenges remain open. The first concerns the computational complexity of computing an autobidding equilibrium in the sponsored shopping setting. While we have established the existence of autobidding equilibria via a smoothed framework, this proof does not yield any upper bound on the complexity of computing such an equilibrium. If these equilibria turn out to be intractable, a natural direction is to settle their complexity, for instance by establishing completeness for $\FIXP$ or $\PPAD$. Another natural question is the complexity of computing equilibria that approximately optimize social welfare or revenue.

The second challenge is the question of learning dynamics. As demonstrated by Paes Leme et al. \cite{paes2024complex}, the existence of an autobidding equilibrium is not sufficient to guarantee that standard learning algorithms will converge to it. Understanding the long-term behavior and convergence properties of learning algorithms in the multi-item shopping environment remains a key direction for future research.

\section*{Acknowledgment}
The authors would like to thank Roozbeh Ebrahimi and Balasubramanian Sivan for discussion on the shopping auction model.

\begin{flushleft}
\bibliographystyle{alpha}
\bibliography{ref}
\end{flushleft}

\end{document}